\newtheorem{definition}{Definition}
\newtheorem{theorem}{Theorem}
\newtheorem{lemma}[theorem]{Lemma}
\newtheorem{remark}{Remark}
\DeclarePairedDelimiter\abs{\lvert}{\rvert}%
\let\oldabs\abs
\def\abs{\@ifstar{\oldabs}{\oldabs*}}
\title{Identifying Best Fair Intervention}
\author{
  Ruijiang Gao \\
  University of Texas at Austin \\
  \texttt{ruijiang@utexas.edu} \\
   \And
  Han Feng \\
  University of California, Berkeley \\
  \texttt{han\_feng@berkeley.edu} \\
}
\begin{document}

\maketitle

\begin{abstract}
We study the problem of best arm identification with a fairness constraint in a given causal model. The goal is to find a soft intervention on a given node to maximize the outcome while meeting a fairness constraint by counterfactual estimation with only \textbf{\textit{partial}} knowledge of the causal model. The problem is motivated by ensuring fairness on an online marketplace. We provide theoretical guarantees on the  probability of error and empirically examine the effectiveness of our algorithm with a two-stage baseline.
\end{abstract}

\section{Introduction}
Automated decision-making is becoming more involved in people's lives, with applications in hiring ~\citep{policing2019,hire2019}, last mile delivery~\citep{liuOnTimeLastMile2018}, and online advertising~\citep{bottou2013counterfactual}. These applications have given rise to concerns about the transparency and the fairness of the decision-making process. It is vital to distinguish unfair algorithms from fair ones for better quality of service and substantial societal benefits like diversity, stability and safety. 

Efforts have been made in the design of algorithmic fairness in various settings~\citep{calders2010three,dwork2012fairness,bolukbasi2016man,nabi2018fair,joseph2016fairness}. Many of them define a quantitative measure of fairness and apply it on a dataset. The subjectivity of the definition of fairness has caused variation on different datasets and problems. ~\citet{kusner2017counterfactual} proposed a causal approach to the definition of fairness named \emph{counterfactual fairness}, which attempts to capture the variation caused by a counterfactually different sensitive attribute of the subject.

Our problem is motivated by the example of online advertisement in ~\citet{bottou2013counterfactual,sen2017identifying}, where there are many candidate algorithms for predicting click-through rate and the final revenue is determined by a complex causal relationship represented by a causal model. Causal graphs are useful representations of the complex causal relationship between the variables in many problems and systems~\citep{pearl2009causality}. \citet{sen2017identifying} considers the problem of finding the best algorithm that leads to the largest revenue by best-arm identification under budget constraints. Inspired by these works and recent advances on fairness accountability and transparency, we study the problem of identifying the best \emph{fair} algorithm that leads to maximum revenue while satisfying the fairness constraint under a fixed budget. 

Fair treatment to different customer groups is important to keep business alive. The deployment of an unfair facial recognition software can be detrimental to company's reputation and lead to legal risks~\citep{amz2019}. It is important for companies to prevent different treatment of customers based on their race, gender or other sensitive attributes. In light of this, we propose a novel group level fairness constraint on a continuous outcome, in the case the outcome is the revenue, the constraint bounds the variation of the revenue due to counterfactual change of sensitive attributes. After spending a limited budget, we try to minimize the probability of misidentifying the best \emph{fair} arm. Furthermore, our problem can be formalized as a best-arm identification problem with constraint, or top feasible arm identification in ~\citep{katz2019top}. This constraint is defined as counterfactual fairness in our paper and specifically studied in the causal bandits setting.

To summarize our contributions, we study the problem of identifying best fair intervention in a given causal model. Given \textbf{\textit{partial}} knowledge of the causal graph and the freedom of intervening in actions at certain nodes, we aim to choose a best sequence of interventions to discover which action leads to the largest expected value of the target node while meeting the fairness constraint. We provide theoretical guarantee of exponential decay rate on the probability of error of the proposed algorithm when best fair arm exists, and when it does not, show that our algorithm detects its non-existence with high probability asymptotically to 1. We empirically show its effectiveness against a naive two-stage baseline algorithm that spends half the budget for fair arm identification and the rest for best arm identification. The settings of the experiments will be discussed in detail in Section \ref{sec:na}.

\section{Related Work}
In the context of causal bandits, our work is closely related to ~\citep{lattimore2016causal, sen2017identifying,yabe2018causal}. ~\citet{lattimore2016causal} studied
the problem of identifying the best hard interventions on multiple variables, provided that the 
distribution of the parents of the target node is known under those interventions. In contrast, ~\citet{sen2017identifying} considered soft interventions that affect the mechanism between a `source’ node and its parents, far away from the target. Our work follows this setting and builds on it with a fairness constraint. ~\citep{yabe2018causal} considers an arbitrary set of binary interventions in a causal graph consisted of binary variables. Unlike previous work, we further impose a fairness constraint on arms. The goal is to find the best arm among all the fair arms, which makes the problem more challenging. We also consider both hard intervention and soft intervention with a different choice of directed cut in the causal graph to study the counterfactual fairness. 

The issue of fairness is receiving ever-increasing attention in the machine learning community. There are many research works on deriving fairness metrics or inducing fair algorithms from observational data ~\citep{nabi2018learning,nabi2018fair,gillen2018online,zhang2018fairness,jabbari2017fairness,dwork2012fairness}. Unfairness in algorithms can be caused by bias in data collection and by algorithm design~\citep{olteanu2019social}. It can be detrimental to the companies' business due to the loss of diversity of customers and the vulnerability to varied business environment~\citep{holstein2019improving,zhang2019group}. Thus, it is imperative to design an online algorithm that identifies fair algorithms for deployment beyond common business metrics. In this paper, we propose a counterfactual fairness criterion in the spirit of ~\citet{kusner2017counterfactual} on a continuous outcome. Our setting captures the case of revenue maximization on an online platform. 

In the study of fair bandits.
~\citet{patil2019achieving,li2019combinatorial,chen2019fair} considered a fairness problem to ensure the number of pulls of each arm is at least above a given fraction in settings like resource allocation. ~\citet{metevier2019offline} proposed a framework to use user-specified fairness measure with offline contextual bandit. Unlike this work, we focus on online causal bandits and propose a novel counterfactual fairness measure. 
\citet{joseph2016fairness} defined fairness as a worse arm never being preferred over a better one, which is important in applications like resource allocation. ~\citet{liu2017calibrated} adopted a close fairness definition where arms with similar quality distribution should be selected with similar probability. These definitions are closely related to individual fairness ~\citep{dwork2012fairness}. ~\citet{gillen2018online} considered a linear contextual bandit setting with an unknown fairness metric that is returned by the environment, but in our problem this feedback is not available. Unlike these works, we define the fairness as the expected difference between the revenue (outcome) and the counterfactual revenue for a different sensitive attribute in a given causal model. We deem an algorithm unfair if it has a clear advantage when varying \textit{only} the sensitive attribute in the causal model~\citep{pearl2009causality}. We also extend the counterfactual fairness in ~\citet{kusner2017counterfactual} to a group level fairness measure that addresses continuous variables. It is worth mentioning that the fairness in \citet{joseph2016fairness} is consistent with the objective in maximizing the total reward in traditional bandit literature, while our fairness criterion makes the arm with optimal reward possible to be unfair. ~\citet{katz2018feasible,katz2019top,katz2019feasible} has a similar problem setting as ours. The goal is to find a collection of arms whose means are feasible (inside a polyhedron) and maximize some linear reward function. Our work can also be viewed as an extension of feasible arm identification problem that considers the information leakage between arms due to the causal graph structure. Furthermore, we propose a novel fairness constraint base on the causal structure and estimate it by importance sampling. 


\section{Problem Statement}
In causal bandit problems~\citep{lattimore2016causal,sen2017identifying,yabe2018causal}, agent obtains different rewards for repeated intervention on a given causal model ~\citep{pearl2009causality}. A causal model is defined by a directed acyclic graph $\mathcal{G}$ over a set of random variables $\mathcal{V} = \mathcal{S}\cup\mathcal{X}\cup \mathcal{Y} = \{S, X_1, \cdots, X_N, Y\}$ with a joint distribution $P$ over $\mathcal{V}$
that factorizes over $\mathcal{G}$, where $Y$ is the target variable of interest and $S$ is the sensitive attribute. The parents of a variable $X_i$ is denoted as $pa(X_i)$ with a direct edge pointing from $X_j \in pa(X_i)$ to $X_i$. We use $do(X_i = x)$ to represent a hard intervention that assigns the value $x$ to $X_i$. This intervention will remove all the edges from $pa(X_i)$ to $X_i$; correspondingly, soft interventions modify conditional distributions between $pa(X_i)$ and $X_i$. A soft intervention chooses a conditional distribution from $P_k(X_i|pa(X_i)), k = 1,\cdots, K$ and does not change the causal model $\mathcal{G}(\mathcal{V},\mathcal{E})$; it only changes $P$ at variable $X_i$. Previous work tries to find the best interventions that maximizing the reward $Y$. Motivated by ensuring fairness in an online marketplace, we try to answer the question: how to find the best \textit{fair} soft intervention that maximizes the outcome of target variable? We discuss the online marketplace example and our fairness definition in detail below. 

Consider an online advertisement marketplace example as shown in Figure~\ref{fig:example}. This example is motivated by the case introduced in \citet{sen2017identifying} and \citet{bottou2013counterfactual}. Revenue is determined by many variables in a complex causal graph. The supplier aims to find the best prediction algorithms among $K$ options in order to predict the click through rate, which corresponds to the soft intervention $P_i(V|pa(V)), i=1,2,\cdots, K$ where we use $V$ to represent variable that will be soft intervened. The assumption is reasonable because in practice we usually have a good understanding of our deployed algorithms and their output given input. This assumption further enables us to use importance sampling for best arm identification~\citep{sen2017identifying,lattimore2016causal}. The goal of the best arm intervention is to maximize the expected revenue $\mathbb{E}_i[Y]$ given intervention $i$. However, the best intervention is not always fair, in the sense that it may prefer some majority group. For example, one algorithm can generate large revenue for male customers but little for female customers. This difference is modeled by the binary sensitive attribute $S = \{s, s'\}$. We assume the knowledge of the distribution $P(X_k|pa(X_k))$ for all $X_k \in ch(S)$, where $ch(X)$ is the set of child nodes of $X$. We further assume that $S$ has no parents in the causal graph. These two assumptions are natural since most sensitive attributes like race and gender are not determined by outside factors, and the probabilities $P(X_k|pa(X_k))$
, like user query given a male user, 
can be estimated using observational data. They also ensure the validity of our approach to use importance sampling for the counterfactual estimation. We assume $P(X_k|pa(X_k))$ is continuous with respect to each other or have same non-zero support for discrete distribution.
In addition to maximizing the revenue $\mathbb{E}_i[Y]$, we also need to make sure the algorithm deployed is fair with a high probability. It is important for online platforms to make sure that fairness is achieved when deploying their algorithms. Otherwise, minority groups might lose interest in the platform and drop out eventually, leaving the platform prone to miss future business opportunities~\citep{zhang2019group}. Our algorithm requires only \textbf{\textit{partial}} knowledge of $X_k, pa(X_k)$ for $X_k \in ch(S)$ and $S, V, pa(V)$; all other variables in the causal graph may be unobserved across the experiment.

\subsection{Fairness Definition}

Many fairness definitions are adopted in various applications. A common class of approaches for fair inference is to quantify fairness via an associative, rather than causal, relationship between the sensitive feature $S$ and the outcome $Y$. For instance, ~\citep{feldman2015certifying} adopted the
80\% rule, for comparing selection rates based on sensitive features. This is a guideline (not a legal test) advocated by the Equal Employment Opportunity Commission~\citep{eeoc1979} to suggest possible discrimination. In our problem, we ask the question whether the requester will get the same outcome if the sensitive attribute is different --- if they vary the sensitive attribute while fixing other attributes in the causal model. This is referred to as counterfactual fairness ~\citep{kusner2017counterfactual}. Counterfactual quantity concerns ``the value of $Y$ if $S$ had taken value s'', similar to the notation in ~\citet{pearl2009causality,kusner2017counterfactual}, we denote it by $Y_{S\leftarrow s}(X)$. Inspired by ~\citet{kusner2017counterfactual}, we propose a counterfactual measure $\mathbb{E}Y_i^{s',s} = \mathbb{E}[Y_{S\leftarrow s'}(X | S = s)]$ that leads to our fairness definition. This quantity is different from a simple do operation on $S$ since the evidence is still from $S=s$ but evaluated in a counterfactual world. Unlike ~\citet{kusner2017counterfactual}, we define fairness over a continuous outcome, which can be revenue, salary or time served depending on specific application. In order to estimate the counterfactual effect, recall the assumption that we have access to $P(X_k|pa(X_k))$ for all $X_k \in ch(S)$, and we can estimate the counterfactual by importance sampling. We define the counterfactual fairness with respect to the final outcome as:
$
    \zeta_i^{s',s} = \mathbb{E}Y_i^{s',s}- \mathbb{E}Y_i^{s,s} 
$.
This leads to the definition of fair interventions:

\begin{definition}[$\mathcal{E}$-fair intervention] An intervention $i$ is said to be an $\mathcal{E}$-fair intervention for attribute $S\in\{s, s'\}$ if it satisfies $|\zeta_i^{s',s}| < \mathcal{E}$ and $|\zeta_i^{s,s'}| < \mathcal{E}$. 
\end{definition}

The hyperparameter $\mathcal{E}$ is problem dependent, and different platforms and socioeconomic conditions will influence the choice of $\mathcal{E}$. 


\subsection{Budget Constraint}
The problem we consider involves a fixed budget $B$, and pulling arm $i$ incurs a cost $c_i$. After pulling each arm, we observe $\{S, X, pa(X), Y\}$ for all $X \in \{ch(S),V\}$. Additionally, we have the option to do hard intervention by recruiting volunteers with desired sensitive attributes $s \in S$. Denote the cost for obtaining a data point with $S=s$ for arm $k$ as $a_k$, $S=s'$ as $b_k$. It is required that the average cost of sampling and revealing does not exceed a cost budget $B$, i.e. $\sum_{k=0}^{K-1}c_k \nu_{Y,k} + a_k \nu_{s,k} +b_k \nu_{s',k}\leq B$, where $\nu_{Y,k}, \nu_{s,k}, \nu_{s',k}$ are the fraction of time that we pull $k$-th arm (doing k-th soft intervention) given $S=s$, $S=s'$  respectively, and $\sum_k \nu_{Y,k}+ \nu_{s,k}+ \nu_{s',k} = 1$.

\subsection{Objective}
The common objective of simple regret for best-arm identification is not well-suited in our problem because for unfair arms, the regret is undefined. Therefore, we restrict our theoretical analysis to bounding the error probability of the problem defined as $e(B) = \mathbb{P}(\hat{k}(B)\neq k^\star)$.
Here $k^\star$ is the best \emph{fair} intervention, which is defined as the arm that has the largest expect outcome among all arms that are $\mathcal{E}$-fair. However, if the cost of choosing an unfair arm is known a priori, we can easily extend our proof for simple regret. 

\begin{figure*}[htp]
    \centering
    \includegraphics[width=4in]{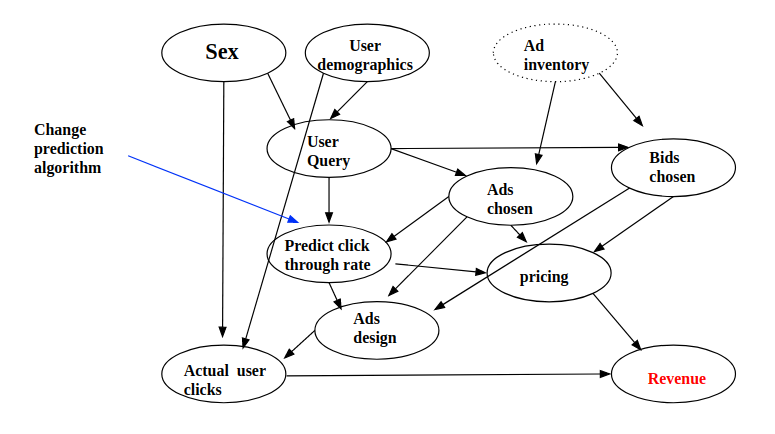}
    \caption{An example motivated by ~\citet{sen2017identifying} and ~\citet{bottou2013counterfactual} of finding the best intervention on an online advertisement marketplace, the supplier need to choose the best prediction algorithm to maximize the revenue. Here Sex is the sensitive attribute we want to protect and ad inventory is hidden variable, our algorithm only need to observe part of the graph for the constraint best arm identification problem.}
    \label{fig:example}
\end{figure*}

\section{Main Result}

In this section, we provide our main theoretical result on a successive rejection
style algorithm that leverages the information leakage and simultaneously identifies fair arms via importance sampling. Since we know the distribution of different interventions, after pulling one arm, we can use importance sampling to help estimate the reward of all other arms. Moreover, we use importance sampling clipping from \citep{lattimore2016causal,sen2017identifying} to control the variance in the estimation of outcome and fairness. We provide theoretical guarantee on the probability of error under budget constraint $B$: $e(T,B)$. 


The samples from arm $j \in [K]$ can be used to estimate $\mathbb{E}_i[Y]$, the outcome of arm $i$. Formally, 
    $\mathbb{E}_i[Y]=\mathbb{E}_j\left[Y\frac{P_i(V|pa(V))}{P_j(V|pa(V))}\right]$.
The equation is valid because the only change of the joint probability in the causal model under arm $i$ and $j$ is at $P(V|pa(V))$. Similarly, we have the estimates of the counterfactual effect
\begin{align}
    \mathbb{E}_{i,s'}[Y^{s,s'}]&=\mathbb{E}_{j,s'}\left[Y\frac{P_i(V|pa(V))}{P_j(V|pa(V))} 
   \times \frac{\prod_{X_i\in ch(S)}P(X_i|pa(X_i)\backslash S,S=s)}{\prod_{X_i\in ch(S)}P(X_i|pa(X_i)\backslash S, S=s')} \right],
\end{align}
because the only difference in the product of the joint distribution $\mathcal{P}$ is $\prod_{X_i\in ch(S)}P(X_i|pa(X_i))$. 

In the spirit of~\citep{sen2017identifying}, we use the notion of $f$-divergence to define the clip estimators. 
\begin{definition}
The conditional $f$-divergence between two distributions $p(X,Y)$ and $q(X,Y)$ is defined as 
\begin{align}
    D_f(p_{X|Y}\|q_{X|Y})=\mathbb{E}_{q_{X,Y}}\left[f\left(\frac{p_{X|Y}(X|Y)}{q_{X|Y}(X|Y)}\right)\right] 
\end{align}
\end{definition}
In the special case where the function in the $f$-divergence is 
$f_1(x) = x\exp(x-1)-1$, the cut-off threshold in the clipped estimator is selected as
$M_{ij} = 1+\log(1+D_{f_1}(P_i\|P_j))$.

Identifying a fair arm involves a similar cut-off threshold for the estimation of fairness criterion. 
\begin{align*}
    &D_{ij}^{s,s'} = 
    \log\!\Bigg(\!\mathbb{E}_{i,s}\!\!\left[\exp\left|\frac{P_i}{P_j}\!\left(\!\frac{P_s}{P_{s'}}\!-\!1\!\right)\!\right|\right] \! + \! \mathbb{E}_{i,s'}\!\!\left[\exp\left|\frac{P_i}{P_j}\!\left(\!\frac{P_s}{P_{s'}}\!-\!1\!\right)\!\right|\right]\!\!\!\Bigg).
\end{align*}
Here $P_i$ is the shorthand for $P_i(V|pa(V))$ and $P_s$ represents $
\prod_{X_i\in ch(S)}P(X_i|pa(X_i)\backslash S, S=s)$. This measure naturally arises in our theoretical analysis and helps to bound the failing probability of identifying unfair arms in each stage. Due to the unbounded variance brought by importance sampling, the $K$-by-$K$ matrices $M_{ij}$ and $D_{i,j}$ are used to control the variance of our clipped estimators for outcome and fairness measure in Equation \eqref{eqn:yest1} and \eqref{eqn:fest1}. 
\begin{align}
    \label{eqn:yest1}
    \hat{Y}_k^\epsilon \!&=\!  \frac{1}{Z_k}\!\!\!\sum_{\substack{m\in\mathcal{T}_j \\ 1 \leq j \leq K}}\!\!\!\!\frac{1}{M_{kj}} Y_j(m)\frac{P_k(V_j(m)|pa(V)_j(m))}{P_j(V_j(m)|pa(V)_j(m))} \cdot
     \mathbb{I}\!\Big\{\!\!\frac{P_k(V_j(m)|pa(V)_j(m))}{P_j(V_j(m)|pa(V)_j(m))}\!\leq\! 2\log\!\left(\!\frac{2}{\epsilon}\!\right)\!M_{kj}\!\Big\}
\end{align}

\begin{align}
    \label{eqn:fest1}
    \hat{\zeta}_k^{s,s'}(\epsilon)\! &=\!\!  \frac{1}{O_k^{s,s'}}\!\!\sum_{\substack{1\leq j \leq K \\ m\in\mathcal{T}_{j,s'}}} \!\!
    \frac{1}{D_{kj}^{s,s'}}Y_{j,s'}(m)\frac{P_k}{P_j}\left(\frac{P_s}{P_{s'}} - 1\right)\!
    \mathbb{I}\Big\{\left|\frac{P_k}{P_j}\left(\frac{P_s}{P_{s'}} - 1\right)\right|\!\leq\! 2\log\left(\frac{2}{\epsilon}\right)D_{kj}^{s,s'}\Big\}
\end{align}

We aggregate samples from different arms and use the estimators in Equation \eqref{eqn:yest1} and \eqref{eqn:fest1} for the expected outcome and fairness constraint. The estimator in Equation \eqref{eqn:fest1}, unlike the one-sided estimator in ~\citet{sen2017identifying}, is two-sided. Let $\mathcal{T}_k \subset \{1,2,\cdots, \tau\}$ be the indices of all the samples collected from arm $k$, and let $\mathcal{T}_{k,s} \subset \{1,2,\cdots, \tau\}$ be the indices of all the samples collected from arm $k$ with sensitive attribute $s$. We use $\tau_i$, $\tau_{i,s'}$ to denote the samples collected for estimating the outcome $Y$ and for estimating the fairness constraint $\zeta^{s,s'}$. We use $m$ to denote the index of samples. Finally, let $Z_k=\sum_j\frac{\tau_j}{M_{kj}}$, $O_k^{s,s'} = \sum_j\frac{\tau_{j,s'}}{D_{kj}^{s,s'}}$. The constant $\epsilon$ is phase-dependent and is chosen to be $2^{-(l-1)}$ in the proof to balance the bias-variance trade-off in each stage. 

\subsection{Algorithm}
To find the best fair intervention, we proposed a variant of success rejection style algorithm that we named constraint successive rejection (CSR).
The pseudo-code of our methods are listed in Algorithm~\ref{alg:main}, \ref{alg:mainv2} and \ref{alg:budget}. At each phase, we identify safe fair arms set $\mathcal{F}$ and jointly eliminate sub-optimal arms (arms that are fair but have lower reward than the optimal fair arm) and unfair arms and get remaining arms $\mathcal{R}$. The step size in each phase of each set is carefully designed to control the probability of mis-identification. A naive solution to this problem is to use a two-stage algorithm that identifies all fair arms in the first phase and uses any best-arm identification module in the second phase to identify the best arm. We will also experiment with this approach as a baseline in Numeric Analysis Section. Intuitively, our joint approach should have a better guarantee on the probability of error because samples are used more efficiently. We set $n(T) = \lceil \log2\log10\sqrt{T}\rceil$ and $\overline{\log}(n) = \sum_{i=1}^n\frac{1}{i}$; in Algorithm \ref{alg:budget}, $\tau(l) = \frac{T}{l\overline{\log}(n(T))}$ and $\sum_i \tau(l) = T$. The three matrices used in budget allocation algorithm are decided by the divergence matrices and are used to solve a linear programming problem. The $\mathbf{A}, \mathbf{B}, \mathbf{C}$ matrices in Algorithm \ref{alg:budget} corresponds to the $K$-by-$K$ outcome divergence matrix and two fairness divergence matrix filled with $M_{ij},D_{i,j}^{s,s'},D_{i,j}^{s',s}$. By solving the max-min problem, we can ensure a uniform good approximation in estimating the outcome and the fairness constraint. It returns $\tau_{Y,j}, \tau_{s,j}, \tau_{s',j}$, which are the number of samples allocated to estimate $\hat Y_H$, $\hat \xi^{s,s'}$ and $\hat \xi^{s',s}$. This allocation considers the budget constraint and information leakage through outcome and fairness divergence and tries to optimize the variance of the worst arm. Moreover, if after running Algorithm 1, $|\mathcal{F}|=0$, we declare that no fair arm is found. 

\begin{minipage}{0.46\textwidth}
\begin{algorithm}[H]
\caption{Constraint Successive Rejection (CSR) - Given $T, B$}
\begin{algorithmic}
\label{alg:main}
\FOR{$l = 1$ to $n(T)$}
\STATE get $\tau_Y(l), \tau_{s'}(l),\tau_s(l)$ using Algorithm \ref{alg:budget}
\STATE Use arm $k$, $\tau_Y(l), \tau_{s'}(l),\tau_s(l)$ times and collect samples $(Y,X,pa(X),S)$, for 
\FOR{$k\in\mathcal{R}$}
\STATE Calculate $\hat{Y}_k, \hat{\zeta}_k^{s,s'}, \hat{\zeta}_k^{s',s}$
\ENDFOR 
\IF{$|\mathcal{R}| = 1$}
 \STATE return arm in $\mathcal{R}$
\ENDIF
\STATE $\mathcal{F} = \Big\{k\in \mathcal{R}(l): \hat{\zeta}_k^{s,s'} + \frac{3}{2^l} < \mathcal{E} \land \hat{\zeta}_k^{s,s'} - \frac{3}{2^l} > -\mathcal{E} \land \hat{\zeta}_k^{s',s} + \frac{3}{2^l} < \mathcal{E} \land \hat{\zeta}_k^{s',s} - \frac{3}{2^l} > -\mathcal{E}\Big\} $
\IF{$|\mathcal{F}| = 0$} 
    \STATE go to next phase 
\ENDIF 
\STATE $\hat{Y}_H = \underset{k\in \mathcal{F}}{\max}\hat{Y}_k$
\STATE $\mathcal{R} = \mathcal{R} - \{k\in\mathcal{R}:\big(\hat{Y}_H > \hat{Y}_k + \frac{5}{2^l}\big)\lor 
\big(
\hat{\zeta}_k^{s,s'} - \frac{3}{2^l} > \mathcal{E}
\big)\lor 
\big(
\hat{\zeta}_k^{s,s'} + \frac{3}{2^l} < -\mathcal{E}
\big) \lor 
\big(
\hat{\zeta}_k^{s',s} - \frac{3}{2^l} > \mathcal{E} 
\big)\lor 
\big(
\hat{\zeta}_k^{s',s} + \frac{3}{2^l} < -\mathcal{E}
\big)
\}$
\ENDFOR
\end{algorithmic}
\end{algorithm}
\end{minipage}
\hspace{1em}
\begin{minipage}{0.55\textwidth}
\begin{minipage}{0.9\linewidth}
\begin{algorithm}[H]
\caption{Constraint Successive Rejection (CSR) - Given $T, B$ (V2)}
\begin{algorithmic}
\label{alg:mainv2}
\STATE Same as Algorithm \eqref{alg:main} but calculate $\hat{Y}_k$,$ \hat{\zeta}_k^{s,s'}$, $\hat{\zeta}_k^{s',s}$ from all the samples collected so far.
\end{algorithmic}
\end{algorithm}
\end{minipage}
\\
\begin{minipage}{0.9\linewidth}
\begin{algorithm}[H]
\caption{Budget Allocation}
\begin{algorithmic}
\label{alg:budget}
\STATE ALLOCATE($c, a, b, B, \mathbf{A}, \mathbf{B}, \mathbf{C}, \mathcal{R}, \tau$)
\begin{align*}
    \frac{1}{\sigma^\star(B, \mathcal{R})} & = v^\star(B,\mathcal{R}) \\ 
    & = \max_{\mathbf{v}}\min_{k\in\mathcal{R}} [\mathbf{A}\nu_{Y}, \mathbf{B}\nu_s, \mathbf{C}\nu_{s'}]_k \\ \nonumber 
    s.t. & \sum_{i=0}^K c_i\nu_{Y,i} + a\nu_{s,i} + b\nu_{s',i} \leq B\text{, } \\ \nonumber
    &\sum_{i=1}^K\nu_{Y,i}+\nu_{s,i}+\nu_{s',i} = 1 \quad \text{, and } \\ \nonumber 
    &\nu_{Y,i},\nu_{s,i},\nu_{s',i} \geq 0 \nonumber
\end{align*}
\STATE return 
\begin{align*}
        \quad \tau_{Y,j}, \tau_{s,j}, \tau_{s',j} = \nu_{Y,j}^\star\tau ,  \nu_{s,j}^\star\tau  , \nu_{s',j}^\star\tau
\end{align*}
\end{algorithmic}
\end{algorithm}
\end{minipage}
\end{minipage}
\subsection{Theoretical Result}
We provide a problem-dependent theoretical guarantee for Algorithm \ref{alg:main}. Theorem 1 shows that CSR guarantees the exponential decaying error rate for returning the optimal arm when \textit{best fair} arm exists. When there is no fair arm, Theorem 2 shows that CSR guarantees exponential decaying error rate of returning no fair arms found. 
\begin{theorem}
Consider a problem instance with $K$ candidate arms. Suppose $K>1$ and there exists a best fair arm $k^*$ satisfying $\abs{\xi_{k^*}^{s,s'}},\abs{\xi_{k^*}^{s',s}} < \mathcal{E}$. Let $\rho_k$ be the optimal gap associated with each arm defined in the proof. Further define 
    $\bar{H} = \max_{k\neq k^\star} \frac{\rho_k^3}{2^{-2\rho_k}v^\star(B,R^\star(k))^2}$.
The error probability of Algorithm 1 is bounded as 
\begin{align}
        e(T,B) 
    &  \leq  8K^2\rho^\star\exp(-\frac{T}{8\bar{H}\overline{\log}(n(T))}),
\end{align}
where $\rho^\star$ and $\bar{H}$ are problem-dependent constants. 
\end{theorem}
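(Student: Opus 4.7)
The plan is to follow the classical successive-rejection template: (i) establish per-phase concentration of the clipped importance-sampling estimators, (ii) show that on a ``good event'' the best fair arm $k^\star$ survives every test while each suboptimal or unfair arm $k$ is eliminated by a problem-dependent phase $\rho_k$, and (iii) union-bound the failure probabilities across phases and arms.

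First I would prove a uniform concentration inequality for $\hat Y_k^{\epsilon}$ and $\hat\zeta_k^{s,s'}(\epsilon)$, $\hat\zeta_k^{s',s}(\epsilon)$ at each phase $l$ with $\epsilon=2^{-(l-1)}$. The choice $f_1(x)=x\exp(x-1)-1$ together with the clip thresholds $M_{kj}$ and $D_{kj}^{s,s'}$ lets one bound the MGF of each summand in \eqref{eqn:yest1} and \eqref{eqn:fest1} by $1+\tfrac12 x^2$ after normalising by $M_{kj}$ or $D_{kj}^{s,s'}$ (this is the mechanism used in Sen et al., which I would re-derive for the two-sided fairness estimator). The bias introduced by clipping is at most $\epsilon/2$ by a standard truncation argument using the $f_1$-divergence bound, and the variance after plugging in the Algorithm~\ref{alg:budget} allocation is controlled by $1/v^\star(B,\mathcal{R}(l))^2$. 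Chernoff then gives
\begin{equation*}
\Pr\!\Big(\bigl|\hat Y_k-\mathbb{E}_k[Y]\bigr|>1/2^{\,l}\Big)
\;\lesssim\;
\exp\!\Big(-c\,\tau(l)\,v^\star(B,\mathcal{R}(l))^2\,2^{-2l}\Big),
\end{equation*}
with an analogous bound for the fairness estimators.

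Second, I would define the favourable event $\mathcal{G}=\bigcap_{l,k}\mathcal{G}_{l,k}$ where $\mathcal{G}_{l,k}$ asks that all three estimators for arm $k\in\mathcal{R}(l)$ lie within $1/2^l$ of their means. On $\mathcal{G}$, the $\pm 3/2^l$ slacks in the definition of $\mathcal{F}$ and the $\pm 3/2^l$, $\pm 5/2^l$ slacks in the elimination rule imply (a) $k^\star$ is always admitted to $\mathcal{F}$ and hence $\hat Y_H\ge \hat Y_{k^\star}$, so $k^\star$ is never eliminated; and (b) any $k\neq k^\star$ is eliminated as soon as $1/2^{\,l}$ is smaller than a constant times the smallest of $\mathbb{E}[Y_{k^\star}]-\mathbb{E}[Y_k]$, $\mathcal{E}-|\xi_k^{s,s'}|$, $|\xi_k^{s,s'}|-(-\mathcal{E})$, and the analogous fairness gaps on the $s',s$ side. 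Taking $\rho_k$ to be the ceiling of $\log_2$ of the inverse of this minimum gives the optimal gap used in the theorem statement, and on $\mathcal{G}$ the algorithm returns $k^\star$ by phase $\max_k\rho_k\le n(T)$.

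Third, I would compute the total failure probability. For each arm $k$, the dominant phase is $l=\rho_k$, and since $\tau(l)=T/(l\,\overline{\log}(n(T)))$, substituting into the concentration bound produces an exponent of order $-\,T v^\star(B,R^\star(k))^2\,2^{-2\rho_k}/(\rho_k\,\overline{\log}(n(T)))$. Maximising $\rho_k^3/(2^{-2\rho_k}v^\star(B,R^\star(k))^2)$ over $k\neq k^\star$ recovers $\bar H$, and a union bound over the $n(T)\le \rho^\star$ phases and the at most $K$ surviving arms and $K$ paired tests yields the prefactor $8K^2\rho^\star$ and the stated $\exp(-T/(8\bar H\,\overline{\log}(n(T))))$ tail.

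The main obstacle I expect is handling the two-sided fairness estimator: unlike the one-sided outcome estimator of Sen et al., the factor $P_s/P_{s'}-1$ can change sign, so the $f_1$-based clip argument has to be set up with $|\cdot|$ inside the indicator and paired with a symmetric MGF bound to keep both bias and variance in control. A secondary subtlety is that the budget allocation $v^\star(B,\mathcal{R}(l))$ is adaptive: the gap $\rho_k$ must be matched to the allocation at the phase $k$ is eliminated, which is why the statement uses $v^\star(B,R^\star(k))$ rather than the initial $v^\star(B,[K])$; monotonicity of $v^\star$ in $\mathcal{R}$ (smaller surviving set $\Rightarrow$ at least as good per-arm variance) is the key fact needed here. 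Everything else is bookkeeping.
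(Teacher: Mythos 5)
Your overall route is the paper's own: clipped importance-sampling estimators with $f_1$-divergence thresholds (bias at most $\epsilon/2$, deviation controlled through the allocation value $v^\star$ from the budget LP), phase-wise good events with $\epsilon=2^{-(l-1)}$, gap-defined ideal elimination phases $\rho_k$, and a union bound over phases and arms yielding $8K^2\rho^\star\exp(-T/(8\bar H\,\overline{\log}(n(T))))$. The genuine problem is step (a) of your good-event analysis. On the concentration event it is \emph{not} true that $k^\star$ is always admitted to $\mathcal{F}(l)$: admission requires $\hat\zeta_{k^\star}^{s,s'}+\frac{3}{2^l}<\mathcal{E}$ (and the three symmetric inequalities), which fails at every early phase with $\frac{3}{2^l}\gtrsim\mathcal{E}-|\zeta_{k^\star}^{s,s'}|$ even if the estimates were exact. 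The paper handles this by splitting into a ``safe event'' $E_l$ (concentration only, used for $l\le l_0$) and a ``success event'' $S_l$ (which additionally demands $k^\star\in\mathcal{F}(l)$ and is invoked only for $l>l_0$, with $l_0$ determined by $k^\star$'s own fairness margins); your single event $\mathcal{G}$ with ``always admitted'' is false as stated, and since the algorithm declares that no fair arm exists when $\mathcal{F}$ stays empty, you do need $k^\star\in\mathcal{F}(l)$ at some late phase for the algorithm to return an arm at all.

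Second, the inference ``$k^\star\in\mathcal{F}$, hence $\hat Y_H\ge\hat Y_{k^\star}$, so $k^\star$ is never eliminated'' points the inequality in the unhelpful direction: elimination fires when $\hat Y_H>\hat Y_{k^\star}+\frac{5}{2^l}$, so what must be shown is an upper bound on $\hat Y_H-\hat Y_{k^\star}$. The needed ingredient --- and the reason the $+\frac{3}{2^l}$ margin sits inside the definition of $\mathcal{F}$ --- is that on the concentration event every arm in $\mathcal{F}(l)$ is genuinely $\mathcal{E}$-fair, hence has mean at most $\mu_{k^\star}$, which gives $\hat Y_H\le\mu_{k^\star}+\frac{1}{2^{l-1}}$ while $\hat Y_{k^\star}\ge\mu_{k^\star}-\frac{3}{2^l}$; without this, an unfair arm with larger mean could slip into $\mathcal{F}(l)$ and eject $k^\star$ even though all estimators concentrate. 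Both issues are repairable exactly by the paper's $E_l$/$S_l$/$l_0$ device together with the ``$\mathcal{F}(l)$ contains only truly fair arms'' observation, and the rest of your plan --- the two-sided clipping for the fairness estimator, the allocation guarantee $Z_k/\tau,\,O_k^{s,s'}/\tau\ge v^\star$, the $\rho_k^3$ dependence (not $\rho_k$, as written in your exponent) entering $\bar H$, and the $8K^2\rho^\star$ prefactor --- matches the paper's bookkeeping.
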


\textbf{Proof Sketch:}
To derive this result, note that error happens when the best fair arm is eliminated in a certain round. The elimination occurs (i) when a fair arm's estimated reward exceeds the best fair arm or (ii) when the best fair arm is deemed unfair. Both suggests a lower bound on the error of estimated reward and on the error of estimated fairness gap. The two types of error are described with our notions of ``success event'' and ``safe event'' in our online analysis. Our algorithm operates in rounds, in each stage, our algorithm tries to eliminate unfair and sub-optimal arms simultaneously. By carefully control the failing probability of each stage, we get an exponential decaying rate up to a log factor.  \qed 
\begin{remark}
In order to jointly eliminate sub-optimal and unfair arms in each phase, our algorithm has a larger constant compared to ~\citet{sen2017identifying}, where there is no constraint on fairness. The larger constant $\bar{H}$ is from more constraint on fairness for identifying fair arms, which makes our problem harder than ~\citet{sen2017identifying}. Our problem-dependent constants $\rho_k$ includes the fairness gap too, which characterize the difficulty level of identifying each arm using our algorithm. 
\end{remark}

Next, we consider the setting when no fair arm exists, the following theorem shows our algorithm will also identify this setting with a probability asymptotically converges to 1. 
  \begin{theorem}
    Suppose that $K > 1$ and for all arms $k\in [K]$ either $\abs*{\xi_k^{s,s'}} \geq \mathcal{E}$ or $\abs*{\xi_k^{s', s}} \geq \mathcal{E}$. Define the constant
    $
    \xi^* = \min_{k\in [K]} \min(\abs{\abs*{\xi_k^{s,s'}} -\mathcal{E}},\abs{\abs*{\xi_k^{s',s}}. -\mathcal{E}}).
    $
Then, the probability that the algorithm declares no fair arm exists is at least $$1 - 4 K n(T)  \exp (-\frac{(\xi^*)^2 T }{32 n(T)^3 \overline{\log}\left(n(T)\right)}v^*(B)^2),$$
 where $v^*(B) = \min_{\emptyset \neq  R \subseteq [K]} v^*(B, R)$.
  \end{theorem}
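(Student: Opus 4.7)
The algorithm declares ``no fair arm exists'' precisely when $|\mathcal{F}|=0$ upon termination. Since $\mathcal{F}$ is rebuilt from scratch in each phase, a clean sufficient condition is that no arm is ever added to $\mathcal{F}$ during any of the $n(T)$ phases. My plan is to construct one ``good event'' on which every fairness inclusion test correctly excludes every (necessarily unfair) arm, and then to bound the probability of its complement by concentration plus a union bound.

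Fix an arm $k$ and a phase $l\in\{1,\dots,n(T)\}$. By hypothesis, at least one of $|\xi_k^{s,s'}|$ or $|\xi_k^{s',s}|$ is at least $\mathcal{E}$, and by the definition of $\xi^*$ the corresponding violation gap is at least $\xi^*$. Without loss of generality I assume $\xi_k^{s,s'}\geq\mathcal{E}+\xi^*$ (the three other sign/direction cases are symmetric). For $k$ to be placed in $\mathcal{F}$ the algorithm requires $\hat{\zeta}_k^{s,s'}+3/2^l<\mathcal{E}$, which forces
\begin{align*}
\hat{\zeta}_k^{s,s'}-\xi_k^{s,s'}\;<\;-3/2^l-\xi^*.
\end{align*}
In particular, the event that $k$ is admitted to $\mathcal{F}$ is contained in $\{|\hat{\zeta}_k^{s,s'}(\epsilon_l)-\xi_k^{s,s'}|>\xi^*\}$ with $\epsilon_l=2^{-(l-1)}$.

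I would then invoke a concentration bound for the clipped estimator in Equation~\eqref{eqn:fest1}. Because its clipping threshold $2\log(2/\epsilon_l)D_{kj}^{s,s'}=O(l)\cdot D_{kj}^{s,s'}$ is tailored so that each summand is bounded by $O(l)$ while the bias stays at most $\epsilon_l$, a Bernstein-type inequality applied to the weighted aggregate across arms (with per-arm allocation $\tau_{s',j}=\nu^*_{s',j}\tau(l)$ determined by the LP in Algorithm~\ref{alg:budget}) yields
\begin{align*}
\mathbb{P}\!\left(|\hat{\zeta}_k^{s,s'}(\epsilon_l)-\xi_k^{s,s'}|>\xi^*\right)\;\leq\;2\exp\!\left(-\frac{(\xi^*)^2\,\tau(l)\,v^*(B,\mathcal{R}(l))^2}{c\,l^2}\right)
\end{align*}
for an absolute constant $c$, at least once the bias $\epsilon_l$ is below $\xi^*/2$; for the finitely many earlier phases the same inequality still holds because enlarging the deviation threshold only strengthens the Bernstein estimate. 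Substituting $\tau(l)=T/(l\overline{\log}(n(T)))$, bounding $v^*(B,\mathcal{R}(l))\geq v^*(B)$, and using $l\leq n(T)$ produces the per-(arm, phase) bound $2\exp(-(\xi^*)^2\,T\,v^*(B)^2/(32\,n(T)^3\,\overline{\log}(n(T))))$.

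A union bound over the $K$ arms, the $n(T)$ phases, and the four fairness violation directions (at least one of which is active for every unfair arm) yields the $4Kn(T)$ prefactor, and passing to the complement gives the claimed lower bound on the probability of correctly declaring that no fair arm exists. The main technical obstacle is the concentration step itself: the clipped counterfactual-fairness estimator is a nested importance-sampling quantity whose variance is controlled through the bespoke $f$-divergence cutoff $D_{kj}^{s,s'}$, so establishing the Bernstein inequality at the advertised rate requires checking that this cutoff dominates the relevant moment generating function uniformly across arms, and that the budget-allocation LP simultaneously delivers effective sample size $v^*(B)$ against every arm that still needs to be ruled out.
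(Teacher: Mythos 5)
Your proposal is correct and follows essentially the same route as the paper's proof: condition on $\mathcal{F}$ being empty in every phase, reduce admission of an (unfair) arm to a deviation of the clipped fairness estimator of size at least $3/2^l+\xi^*$, apply the aggregated concentration bound together with the budget-allocation guarantee $v^\star(B,\mathcal{R}(l))\geq v^\star(B)$, substitute $\tau(l)=T/(l\,\overline{\log}(n(T)))$ and $l\leq n(T)$, and union bound over arms and phases to obtain the $4Kn(T)$ prefactor. Your handling of the bias and the choice $\delta=2/2^l+\xi^*$ (implicit in your threshold $3/2^l+\xi^*$) matches the paper's, and restricting attention to the single violated fairness direction per arm is, if anything, a slightly cleaner accounting than the paper's joint-deviation step.
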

  The proof of both theorems is provided in the supplementary material due to space limit. The probability of correctly identifying the non-existence of fair arms approaches 1 as $T$ goes to infinity. 
  
\section{Numerical Analysis}
\label{sec:na}
To demonstrate the effectiveness of the proposed approach, we use the causal graph as shown in Figure~\ref{fig:my_label}. Each sensitive attribute in $S$ has a corresponding categorical distribution to generate $V \in \{1, \cdots, m\}$, and each arm $V_i$ has a binomial distribution. The random variable $\epsilon$ has a binomial distribution with parameter $p=0.99$, $Y$ is generated by $\epsilon$ and $V_i$ as follows:
\begin{equation}
  Y = \begin{cases}
    f(V_i), & \text{if $\epsilon=1$}.\\
    1-f(V_i), & \text{otherwise}.
  \end{cases}
\end{equation}


We follow the experiment setting in~\citet{sen2017identifying} and assume that there is only one easy arm that can be pulled many times. It corresponds to the setting that we have a sufficient number of samples from a deployed system but it is costly to acquire samples for new arms. We define the following budget constraint:
$
    \sum_{k\neq 1}\nu_{Y,k} + \nu_{s,k} + \nu_{s',k} < \frac{1}{\sqrt{T}}. 
$
This implies we cannot acquire more than $\sqrt{T}$ samples for all other arms for reward and fairness estimation. This setting makes it easier for us to adjust the divergence with arm 1 to see how our algorithm perform under various settings. We set $K=30$ and $m=20$ in our experiment and all results are averaged over $100$ runs. 

We use a two-stage style algorithm as a baseline for comparison. The two-stage algorithm uses half of the budget to identify all arms that satisfy the fairness constraint, and then uses the remaining budget to identify the best arm among the selected fair arms. For the best arm identification module, we use SRIS-v1 and SRIS-v2 in \citet{sen2017identifying}, which are the state-of-the-art algorithms of this problem to the best of our knowledge. The complete description and pseudo-code of the two stage algorithms are provided in the supplementary material. We abbreviate our algorithm CSR-V1, CSR-V2 to V1 and V2, respectively. Similarly, the two baseline algorithms are referred to as \emph{two stage V1} and \emph{two stage V2}. We report the probability of error in our plots below. Each setting has a unique best \emph{fair} arm that our algorithm aims to identify. We plot the probability of error against the total budget in each figure.

\tikzset{node/.style={circle,fill=gray!10,draw,minimum size=0.4cm,inner sep=0pt} }
\tikzset{arc/.style = {->,> = latex, thick, } }
\begin{figure*}
\centering
\begin{minipage}[c]{0.12\textwidth}
\centering
    \begin{tikzpicture}[auto,node distance = 0.1 cm, scale = 0.6] 
    \node[node] at(-1,2.2) (S) { S };
    \node[node] at(-1,1) (V) { V };
    \node[node] at(0, 0) (Y) { Y };
    \node[node] at(1, 1) (E) { $\epsilon$ };
    \node[above right = .1cm and .15 cm of V]  (T) { $P_1, .., P_K$ };
    \draw[arc] (S) to node{ } (V); 
    \draw[arc] (V) to node{ } (Y); 
    \draw[arc] (E) to node{ } (Y);
    \draw[gray,->>] (T) to [bend left] (V);
    \end{tikzpicture} 
    \caption{Causal graph.
    }
    \label{fig:my_label}
\end{minipage}
\hfill
\begin{minipage}[c]{0.85\textwidth}
\centering
\subfigure[No Fairness Constraint]{\label{fig:nofair}\includegraphics[width=0.32\textwidth]{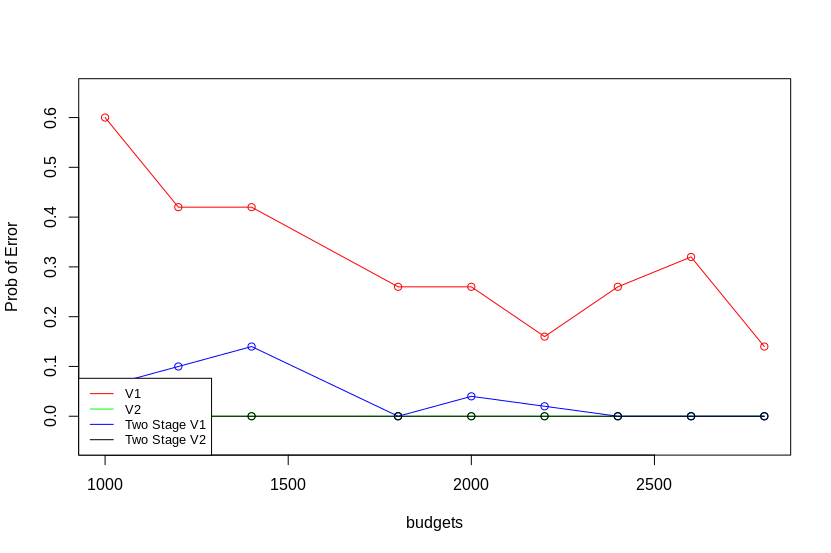}}
\subfigure[Loose Fairness Constraint]{\label{fig:largefair}\includegraphics[width=0.32\textwidth]{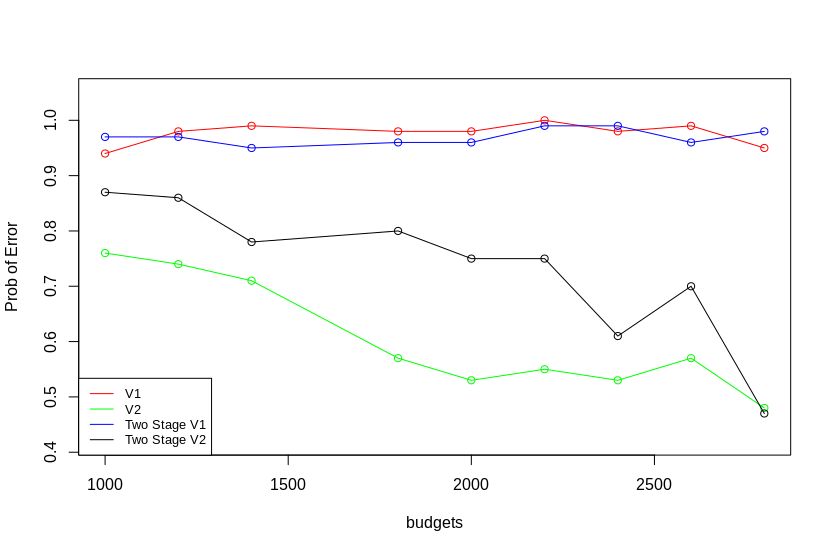}}
\subfigure[Tight Fairness Constraint]{\label{fig:smallfair}\includegraphics[width=0.32\textwidth]{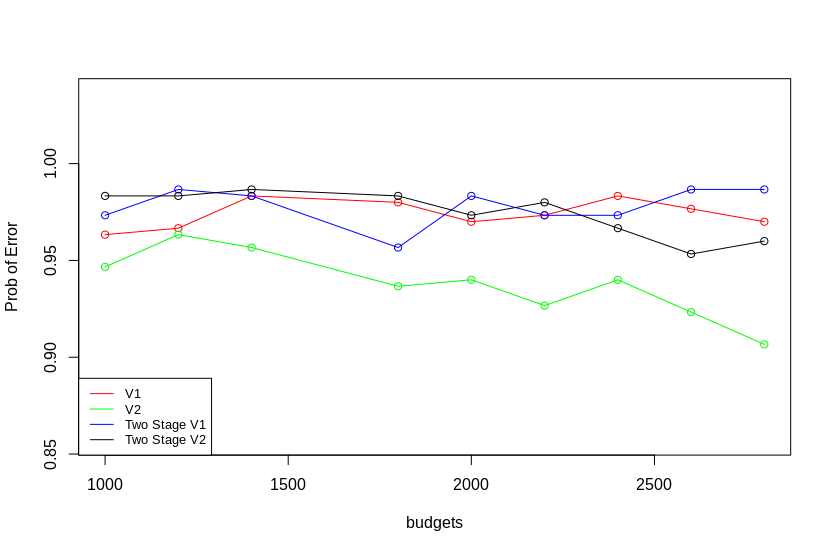}}
\caption{(Red: CSR-V1, Green: CSR-V2, Blue: Two Stage - SRISV1, Purple: Two Stage - SRISV2)}
\label{fig:mainexp}
\end{minipage}
\end{figure*}


\subsection{No Fairness Constraint}
First, we consider the case where the fairness constraint is always satisfied so that the problem degenerates to best arm identification in~\citet{sen2017identifying}. We set the divergence relatively small so that $M_{k0}/\sqrt{K}, D_{k0}^{s,s'}/\sqrt{K}, D_{k0}^{s',s}/\sqrt{K}<30$. Since we mainly use samples from arm $1$ to estimate arm $K$'s outcome and counterfactual fairness, we keep this condition in the following experiments. We set the constraint $\mathcal{E}$ equal to $2$, which all arms satisfy by a large margin. Since our algorithm will spend extra budget on exploring the fairness constraint on sensitive attributes, it should perform worse than a normal best arm identification algorithm. The observation is confirmed in Figure~\ref{fig:nofair}. V2 and Two Stage V2 easily achieve perfect performance, and two stage V1 demonstrates low probability of error because it still spends half of the budget for the best arm identification module. 


\subsection{Loose Fairness Constraint}
In this setting, we consider the case that the fairness constraint is relatively large in the sense that some arms do not meet the fairness constraint, and we need to identify the optimal arm that is fair. We keep the same constraint on the divergence and keep the minimum gap on the final outcome similar. As the fairness constraint becoming smaller, the minimal gap on fairness gets smaller and the problem becomes harder. As a result, only $8$ arms are not feasible. As shown in Figure \ref{fig:largefair},  V2 achieves the best performance in minimizing probability of error comparing to all other baselines. V1 has a similar performance comparing to its two-stage counterpart. This could be due to noisy estimation since only limited samples are used in each phase. 


\subsection{Tight Fairness Constraint}
In this setting, we consider the hardest case where the minimal gap is relatively small in the sense that many arms do not meet the fairness constraint and we need to identify the optimal arm that is fair. Since the fairness constraint is very small, the algorithms cannot identify feasible arms easily, which makes them slower to find the optimal arm. In our experiment, only $9$ arms are feasible. The fairness constraint is set to $0.03$, and the optimal gap of outcome is around $0.06$, which is also fairly small. The final result is shown in Figure \ref{fig:smallfair}. We notice the result is a little noisy when the constraint is small, so the result is averaged over $300$ runs. In this setting, all algorithms suffer from high probability of error and clearly a lot more samples are needed than previous two settings to correctly identify the best fair arm. Two stage V1 and two stage V2 have a very similar performance to V1. However, V2 shows much better performance and a clear decreasing trend when the budget is increasing. This empirically demonstrates the effectiveness of our proposed algorithm.

\subsection{Ablation Study}
Finally, we consider the effect of divergence on the result of our algorithm.  We control the divergence of $M_{k1}$ , $D_{k1}^{s,s'}$ and $D_{k1}^{s',s}$ here to investigate how our results change with them. We use the same causal graph and experimental setting as above. Our analysis suggests that larger divergence implies smaller information leakage and a harder identification process. In all experiments performance of algorithms drops while the divergence gets larger. Our ablation study empirically supports our observation that a lower divergence term facilitates the estimation by importance sampling, which in turn improves the identification of the best \emph{fair} arm. The complete results and discussions are included in supplementary material due to space limit. 

\section{Real Dataset}
In order to examine how different methods perform in practice. We also use a more challenging task with a large causal graph HEPAR-II studied in medicine~\citep{onisko1998probabilistic}. HEPAR-II has 70 nodes and 123 arcs. The average degree of the bayesian network is 3.51 and maximum in-degree  is 6. An subset of the Directed Acyclic Graph~\citep{pearl2009causality} is shown in Figure ~\ref{fig:causal-real}.

\tikzset{node/.style={ellipse,fill=gray!10,draw,minimum size=0.8cm,inner sep=0pt} }
\tikzset{arc/.style = {->,> = latex, thick, } }

\begin{figure}
\centering
    \begin{tikzpicture}[auto,node distance = 0.8 cm, scale = 1.1]
    \node[node] at(-1.2, 4.1) (1) { ... };
    \node[node] at(-0.4, 4.3) (2) { ... };
    \node[node] at(0.4, 4.1) (3) { ... };
    \node[node] at(-0.4,3.3) (ChHepatitis) { ChHepatitis };
    \node[node] at(-1.8, 1.8) (Steatosis) { ... };
    \node[] at(-0.4,2.7) (p) {};
    \node[node] at(-0.4,2) (fibrosis) { fibrosis };
    \node[node] at(-1,1) (Cirrhosis) { Cirrhosis };
    \node[node] at(0, 0) (carcinoma) { carcinoma };
    \node[node] at(1, 1) (PBC) { PBC };
    \node[node] at(1.5,2) (sex) {sex};
    \node[node] at(0.7,2) (age) {age};
    \node[above right = .1cm and .15 cm of fibrosis]  (T) { $P_1, .., P_K$ };
    \draw[arc] (1) to (ChHepatitis);
    \draw[arc] (2) to (ChHepatitis);
    \draw[arc] (3) to (ChHepatitis);
    \draw[arc] (ChHepatitis) to node{ } (fibrosis);
    \draw[arc] (Steatosis) to node{ } (Cirrhosis);
    \draw[arc] (fibrosis) to node{ } (Cirrhosis);
    \draw[arc] (sex) to node{ } (PBC);
    \draw[arc] (age) to node{ } (PBC);
    \draw[arc] (Cirrhosis) to node{ } (carcinoma);
    \draw[arc] (PBC) to node{ } (carcinoma);
    \draw[gray,->>] (T) to [bend right] (p);
    \end{tikzpicture}
    \caption{Causal graph for the HEPAR-II dataset.
    }
    \label{fig:causal-real}
\end{figure}
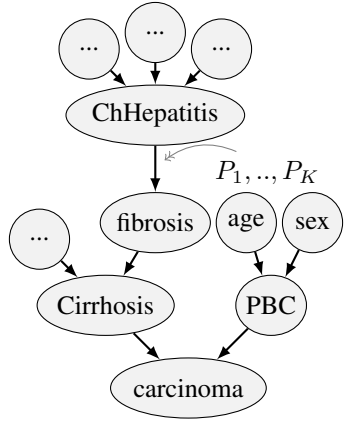

For our experiment, we choose the algorithm, protected attribute, outcome node as `fibrosis', `sex', `carcinoma' in the bayesian network respectively. We created 10 random probability distributions for `fibrosis' given its parents to represent 10 candidate algorithms that we need to identify the best fair arm from. We encode `carcinoma' as 0 and 1 to represent the final reward generated for each customer and this reward needs to be counterfactually fair with respect to sex, which means for each customer, the algorithm will ensure their reward is almost the same if they have a different gender.  We make the same assumptions as in the numeric analysis section for budget constraint, so the first arm represents an arm already deployed in the system so one can pull it for more times and other arms come with a larger cost. The fairness constraint is set to be 0.2. The \href{https://www.bnlearn.com/bnrepository/discrete-large.html#hepar2}{Bayesian Network Repository} provides a fitted bayesian network so that we can do random sampling to generate data that we consider to be ground-truth in our experiment. 

Note that our algorithm only needs partial knowledge of the causal graph, an example is the omitted nodes in Figure ~\ref{fig:causal-real}. In this experiment, we only need access to two conditional probabilities and observation of 6 nodes in the causal graph, which is only 8.6\% of the total nodes. In the context of an online marketplace, it means less data and teams are needed to ensure the algorithmic fairness for reasons like efficiency and security.

We experiment with total budget (number of pulls) ranging from 2000 to 10000 with a step size of 2000 and the probability of error is averaged over 50 runs. The result is shown in Figure ~\ref{fig:hepar}, CSR and CSR-V2 still show the best performance under HEPAR-II dataset and the conclusion is consistent with what we found in numerical analysis section. We also find SRISV2 doesn't perform well here, we suspect the first stage will eliminate the best arm by mistake in this more challenging problem. 

\begin{figure}
    \centering
    \includegraphics[width=0.2\textwidth]{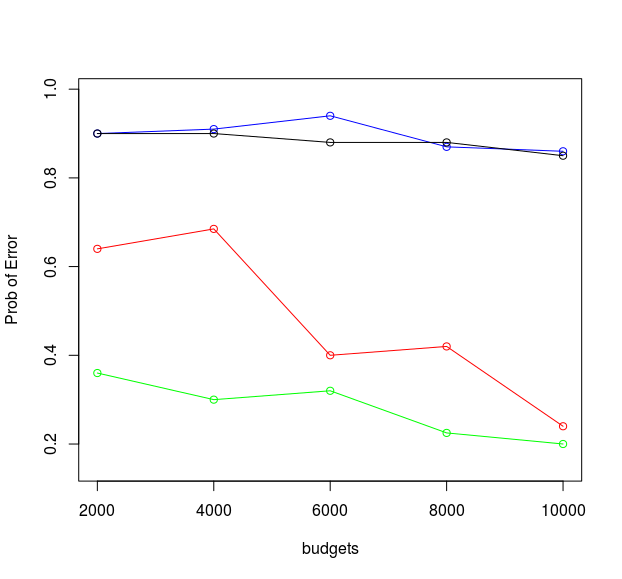}
    \caption{Result for HEPAR-II (Red: CSR-V1, Green: CSR-V2, Blue: Two Stage - SRISV1, Purple: Two Stage - SRISV2)}
    \label{fig:hepar}
\end{figure}

\section{Conclusion}
In this paper, we propose a novel fairness criterion on revenue with respect to sensitive attributes in a given causal graph. Our criterion is based on counterfactual fairness. We introduce the important problem of identifying the best arm that satisfies the fairness constraint. This problem is solved by extending an existing causal bandit algorithm to the case of constrained best arm identification. We empirically and theoretically examine the effectiveness of the proposed method. This paper considers only one binary sensitive attribute. A general setting involves many sensitive attributes that people want to protect, and they could have complex dependencies. We also refrain from the discussion of identifiability and counterfactual inference in partially or completely unknown causal graphs in this paper and assume a given causal graph with full access to desired conditionals. These more challenging, but no less important problems are left for future research.

\newpage
\bibliographystyle{plainnat}
\bibliography{example_paper} 
\onecolumn
\appendix

%
\section{Proofs}

\subsection{Importance Sampling Clipped Estimator}
\begin{definition}{$\eta_{i,j}(\epsilon)$}
\begin{align}
    \eta_{i,j}(\epsilon) = \min \left\{\eta: 
    \mathbb{P}_i\left(\frac{P_i(V|pa(V))}{P_j(V|pa(V))}>\eta\right) \leq \frac{\epsilon}{2}
    \right\}
\end{align}
\end{definition}

\begin{definition}{$\gamma_{i,j}^{s,s'}(\epsilon)$}
\begin{align}
    \gamma_{i,j}^{s,s'}(\epsilon) = \min  \Big\{\gamma : \; 
    &\mathbb{P}_{i,s}\left(\Big|\frac{P_i(V|pa(V))}{P_j(V|pa(V))}\left(\prod_{X_k\in ch(S)}\frac{P(X_k)|pa(X_k)\setminus S, S=s)}{P(X_k)|pa(X_k)\setminus S, S=s')}-1\right)\Big| >\gamma\right) + \\ \nonumber
    & 
    \mathbb{P}_{i,s'}\left(\Big|\frac{P_i(V|pa(V))}{P_j(V|pa(V))}\left(\prod_{X_k\in ch(S)}\frac{P(X_k)|pa(X_k)\setminus S, S=s)}{P(X_k)|pa(X_k)\setminus S, S=s')}-1\right)\Big| >\gamma\right) 
    \leq \frac{\epsilon}{2}
    \Big\}
\end{align}
\end{definition}

The clipping threshold defined above are used in the estimators below. 

\begin{align}
    \label{eqn:yest}
    \hat{Y}_i^{(\eta)}(j) = \frac{1}{t}\sum_{m=1}^t Y_j(m)\frac{P_i(V_j(m)|pa(V)_j(m))}{P_j(V_j(m)|pa(V)_j(m))}\mathbb{I}\Big\{\frac{P_i(V_j(m)|pa(V)_j(m))}{P_j(V_j(m)|pa(V)_j(m))}\leq \eta_{i,j}(\epsilon)\Big\}
\end{align}

\begin{align}
    \label{eqn:fest}
    \hat{\zeta}_{i,j}^{s,s'}(\gamma) = &\frac{1}{t}\sum_{m=1}^{t}Y_{j,s'}(m) \frac{P_i(V_{j,s'}(m)|pa(V)_{j,s'}(m))}{P_j(V_{j,s'}(m)|pa(V)_{j,s'}(m))}
    \left(\prod_{X_k\in ch(S)}\frac{P(X_{k_{j,s'}}|pa(X_k)_{j,s'}\setminus S, S=s)}{P(X_{k_{j,s'}}|pa(X_k)_{j,s'}\setminus S, S=s')} - 1\right)
    \\ \nonumber
    &\mathbb{I}\Big\{
    \frac{P_i(V_{j,s'}(m)|pa(V)_{j,s'}(m))}{P_j(V_{j,s'}(m)|pa(V)_{j,s'}(m))}
    \Big|\prod_{X\in ch(S)}\frac{P(X_{k_{j,s'}}|pa(X_k)_{j,s'}\setminus S, S=s)}{P(X_{k_{j,s'}}|pa(X_k)_{j,s'}\setminus S, S=s')} - 1\Big| \leq \gamma_{i,j}^{s,s'}
    \Big\}
\end{align}

\begin{align}
    \hat{\zeta}_{i,j}^{s',s}(\gamma) = &\frac{1}{t}\sum_{m=1}^{t}Y_{j,s}(m) \frac{P_i(V_{j,s}(m)|pa(V)_{j,s}(m))}{P_j(V_{j,s}(m)|pa(V)_{j,s}(m))}
    \left(\prod_{X_k\in ch(S)}\frac{P(X_{k_{j,s}}|pa(X_k)_{j,s}\setminus S, S=s')}{P(X_{k_{j,s}}|pa(X_k)_{j,s}\setminus S, S=s)} - 1\right)
    \\ \nonumber
    &\mathbb{I}\Big\{
    \frac{P_i(V_{j,s}(m)|pa(V)_{j,s}(m))}{P_j(V_{j,s}(m)|pa(V)_{j,s}(m))}
    \Big|\prod_{X_k\in ch(S)}\frac{P(X_{k_{j,s}}|pa(X_k)_{j,s}\setminus S, S=s')}{P(X_{k_{j,s}}|pa(X_k)_{j,s}\setminus S, S=s)} - 1\Big| \leq \gamma_{i,j}^{s',s}
    \Big\}
\end{align}

We provide guarantees for the estimators above. We use the shorthand $\mu_i = \mathbb{E}_i[Y]$, which the expected outcome under $i$-th arm. 

\begin{lemma}
$ \hat{Y}_i^{(\eta)}(j)$ and $\hat{\zeta}_{i,j}^{s,s'}(\gamma)$ satisfy the following bounds:
\begin{align}
    \label{eqn:ybound}
    \mathbb{E}_j[\hat{Y}_i^{(\eta)}(j)]\leq \mu_i \leq \mathbb{E}_j[\hat{Y}_i^{(\eta)}(j)] + \frac{\epsilon}{2}
\end{align}

\begin{align}
\label{eqn:zetabound}
    \mathbb{E}\hat{\zeta}_{i,j}^{s,s'}(\gamma) - \frac{\epsilon}{2} \leq \zeta_{i,j}^{s,s'} \leq \mathbb{E}\hat{\zeta}_{i,j}^{s,s'}(\gamma) + \frac{\epsilon}{2}
\end{align}

\begin{align}
\label{eqn:yvarbound}
\mathbb{P}\left(\mu_i-\delta-\frac{\epsilon}{2}\leq\hat{Y}_i^{(\eta)}(j)\leq\mu_i+\delta\right)\geq 1-2\exp(-\frac{\delta^2 t}{2\lambda_{i,j}(\epsilon)^2})
\end{align}

\begin{align}
\label{eqn:zetavarbound}
    \mathbb{P}\left(\zeta_{i,j}^{s,s'}-\frac{\epsilon}{2}-\delta\leq \hat{\zeta}_{i,j}^{s,s'}(\gamma) \leq \zeta_{i,j}^{s,s'}+\frac{\epsilon}{2}+\delta\right)\geq 1-2\exp(-\frac{\delta^2 t}{8\gamma_{i,j}^{s,s'}(\epsilon)^2}).
\end{align}

\end{lemma}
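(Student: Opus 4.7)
The plan is to prove the four bounds for $\hat{Y}_i^{(\eta)}(j)$ and $\hat{\zeta}_{i,j}^{s,s'}(\gamma)$ in parallel, since the arguments for the outcome and the fairness estimators differ mainly in the choice of clipping threshold and the form of the importance weight. The strategy has two stages for each estimator: first establish the bias introduced by truncation using the definitions of $\eta_{i,j}(\epsilon)$ and $\gamma_{i,j}^{s,s'}(\epsilon)$, then invoke Hoeffding's inequality on the bounded truncated summands and combine with the bias bounds to transfer concentration from $\mathbb{E}_j\hat{Y}_i^{(\eta)}(j)$ to $\mu_i$, and analogously from $\mathbb{E}\hat{\zeta}_{i,j}^{s,s'}(\gamma)$ to $\zeta_{i,j}^{s,s'}$.

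For the outcome bias bound \eqref{eqn:ybound}, I would begin with the importance sampling identity $\mu_i=\mathbb{E}_j[Y\,P_i/P_j]$, which is valid because the only distributional change between arms $i$ and $j$ is the conditional at $V$. Subtracting the expectation of the clipped estimator leaves exactly the tail contribution $\mathbb{E}_j[Y\,(P_i/P_j)\mathbb{I}\{P_i/P_j>\eta_{i,j}(\epsilon)\}]$. Using $Y\in[0,1]$ and a second change of measure collapses this to $\mathbb{P}_i(P_i/P_j>\eta_{i,j}(\epsilon))$, which is at most $\epsilon/2$ by the very definition of $\eta_{i,j}(\epsilon)$; this also shows the excess is non-negative, giving the sandwich $\mathbb{E}_j\hat{Y}_i^{(\eta)}(j)\leq\mu_i\leq\mathbb{E}_j\hat{Y}_i^{(\eta)}(j)+\epsilon/2$. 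An analogous decomposition works for \eqref{eqn:zetabound}, with the additional $\prod_{X_k\in ch(S)}P(X_k|\cdot,S=s)/P(X_k|\cdot,S=s')$ factor appearing in the importance weight and with the two-sided clip forcing us to split the tail into an $s$-part and an $s'$-part, whose sum is $\leq\epsilon/2$ by the definition of $\gamma_{i,j}^{s,s'}(\epsilon)$; here the overall bias can sit on either side of $\zeta_{i,j}^{s,s'}$, which is why the sandwich is two-sided.

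For the concentration bounds \eqref{eqn:yvarbound} and \eqref{eqn:zetavarbound}, observe that each summand of $\hat{Y}_i^{(\eta)}(j)$ lies in $[0,\eta_{i,j}(\epsilon)]$ because $Y\in[0,1]$ and the indicator enforces $P_i/P_j\leq\eta_{i,j}(\epsilon)$; similarly, each summand of $\hat{\zeta}_{i,j}^{s,s'}(\gamma)$ lies in $[-\gamma_{i,j}^{s,s'}(\epsilon),\gamma_{i,j}^{s,s'}(\epsilon)]$, where the sign flexibility is because $(P_s/P_{s'})-1$ can be of either sign. Hoeffding's inequality applied to the average of $t$ independent bounded samples then yields the desired subgaussian tail $2\exp(-c\,t\delta^2/R^2)$, where $R$ denotes the range of the summands. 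The asymmetric interval $[\mu_i-\delta-\epsilon/2,\ \mu_i+\delta]$ in \eqref{eqn:yvarbound} then arises by composing the Hoeffding deviation around $\mathbb{E}_j\hat{Y}_i^{(\eta)}(j)$ with the bias sandwich $\mu_i-\epsilon/2\leq\mathbb{E}_j\hat{Y}_i^{(\eta)}(j)\leq\mu_i$; the symmetric interval for $\zeta$ follows by the same composition but using the two-sided bias sandwich.

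The main obstacle is bookkeeping rather than any deep step: correctly tracking the constants in the Hoeffding exponents (the factor $8$ for $\hat{\zeta}$ comes from the range $2\gamma_{i,j}^{s,s'}(\epsilon)$ of the summands, while the $\lambda_{i,j}(\epsilon)$ appearing in \eqref{eqn:yvarbound} reflects a half-range interpretation tied to $\eta_{i,j}(\epsilon)$), and carefully separating the two-sided tail event used to define $\gamma_{i,j}^{s,s'}(\epsilon)$ so that both the $s$-conditioned and $s'$-conditioned counterfactual contributions to the bias are simultaneously controlled by $\epsilon/2$. Beyond these constant-chasing details, the argument is a direct adaptation of the truncated importance sampling analysis of \citet{sen2017identifying}, with the product of fairness importance weights across the children of $S$ as the principal novelty.
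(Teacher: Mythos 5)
Your proposal is correct and follows essentially the same route as the paper's proof: bound the truncation bias by the tail probabilities defining $\eta_{i,j}(\epsilon)$ and $\gamma_{i,j}^{s,s'}(\epsilon)$ (splitting the fairness tail into its $s$- and $s'$-conditioned parts, exactly as the two-probability definition of $\gamma_{i,j}^{s,s'}$ anticipates), then apply Azuma--Hoeffding to the clipped, bounded summands and compose the one-sided bias sandwich for $\hat{Y}$ and the two-sided one for $\hat{\zeta}$ with the concentration event, which reproduces the asymmetric and symmetric intervals and the constants $2$ and $8$. Your reading of $\lambda_{i,j}(\epsilon)$ as $\eta_{i,j}(\epsilon)$ is consistent with the paper's own proof, which states the exponent as $-\delta^{2}t/(2\eta_{i,j}(\epsilon)^{2})$.
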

\begin{proof}
The proof of Equation \eqref{eqn:ybound} is exactly the same as ~\citet{sen2017identifying}, we refer readers to read supplementary of ~\citet{sen2017identifying} for details, we show proof for Equation \eqref{eqn:zetabound},\eqref{eqn:yvarbound},\eqref{eqn:zetavarbound} below. 
\begin{align}
    \nonumber
    &\zeta_{i,j}^{s,s'} =  \mathbb{E}_{j,s'}Y \frac{P_i(V|pa(V))}{P_j(V|pa(V))}\times 
    (\prod_{X_k\in ch(S)}\frac{P(X_k)|pa(X_k)\setminus S, S=s)}{P(X_k)|pa(X_k)\setminus S, S=s')}-1) \\ \nonumber
    &  
    \mathbb{I}\Big\{\Big|
    \frac{P_i(V|pa(V))}{P_j(V|pa(V))}(\prod_{X_k\in ch(S)}\frac{P(X_k)|pa(X_k)\setminus S, S=s)}{P(X_k)|pa(X_k)\setminus S, S=s')}-1)\Big| \leq \gamma_{i,j}^{s,s'}(\epsilon)\Big\}
     \\ \nonumber
    &  + \mathbb{E}_{j,s'}Y \frac{P_i(V|pa(V))}{P_j(V|pa(V))}\times (\prod_{X_k\in ch(S)}\frac{P(X_k)|pa(X_k)\setminus S, S=s)}{P(X_k)|pa(X_k)\setminus S, S=s')}-1)\\ 
    &  
    \mathbb{I}\Big\{\Big|
    \frac{P_i(V|pa(V))}{P_j(V|pa(V))}(\prod_{X_k\in ch(S)}\frac{P(X_k)|pa(X_k)\setminus S, S=s)}{P(X_k)|pa(X_k)\setminus S, S=s')}-1)\Big| > \gamma_{i,j}^{s,s'}(\epsilon) \Big\} \\ \nonumber. 
\end{align}
Denote the second term as $\omega$, we have
\begin{align}
    |\omega| \leq &\mathbb{P}_{i,s}(\Big|\frac{P_i(V|pa(V))}{P_j(V|pa(V))}(\prod_{X_k\in ch(S)}\frac{P(X_k)|pa(X_k)\setminus S, S=s)}{P(X_k)|pa(X_k)\setminus S, S=s')}-1)\Big| >\gamma_{i,j}^{s,s'}) + \\ \nonumber
    & \quad \mathbb{P}_{i,s'}(\Big|\frac{P_i(V|pa(V))}{P_j(V|pa(V))}(\prod_{X_k\in ch(S)}\frac{P(X_k)|pa(X_k)\setminus S, S=s)}{P(X_k)|pa(X_k)\setminus S, S=s')}-1)\Big| >\gamma_{i,j}^{s,s'})\leq \frac{\epsilon}{2}.
\end{align}
All terms in summation of \ref{eqn:yest} and \ref{eqn:fest} are bounded by $\eta_{i,j}(\epsilon)$ and $\gamma_{i,j}^{s,s'}(\epsilon)$ respectively, and by Azuma-Hoeffding:
\begin{align}
    \mathbb{P}(|\hat{Y}_i^{(\eta)}(j) - \mathbb{E}_j[\hat{Y}_i^{(\eta)}(j)]|>\delta)\leq 2\exp(-\frac{\delta^2 t}{2\eta_{i,j}(\epsilon)^2})
\end{align}

\begin{align}
    \mathbb{P}(\left|\hat{\zeta}_{i,j}^{s,s'}(\gamma) - \mathbb{E}_{j,s'}\hat{\zeta}_{i,j}^{s,s'}(\gamma)\right|>\delta)\leq 2\exp(-\frac{\delta^2 t}{8\gamma_{i,j}^{s,s'}(\epsilon)^2}). 
\end{align}

\end{proof}

\subsection{Relating with $f$-divergence}
Note that
\begin{align}
    \mathbb{E}_i\left[\exp(\frac{P_i(V|pa(V))}{P_j(V|pa(V))})\right] = [1+D_{f_1}(P_i||P_j)]e, 
\end{align}
where $f_1(x) = x\exp(x-1) - 1$

\begin{lemma}
\begin{align}
\label{eqn:Mdivproof}
    \eta_{i,j}(\epsilon)\leq
    \log(\frac{2}{\epsilon}) + 1+\log(D_{f_1}(P_i||P_j))\leq
    2\log(\frac{2}{\epsilon})[1+\log(D_{f_1}(P_i||P_j))] = 2\log(\frac{2}{\epsilon})M_{ij}
\end{align}

\begin{align}
    \label{eqn:Ddivproof}
    \gamma_{i,j}^{s,s'}(\epsilon) \leq 2\log(\frac{2}{\epsilon}) D_{i,j}^{s,s'}
\end{align}
\end{lemma}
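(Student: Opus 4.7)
The plan is to establish both bounds by a standard Chernoff-style application of Markov's inequality to the exponential of the importance-weight random variables, then tighten the additive form into the multiplicative form stated in the lemma.

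For \eqref{eqn:Mdivproof}, I would start from the defining inequality $\mathbb{P}_i(P_i(V|pa(V))/P_j(V|pa(V)) > \eta) \leq \epsilon/2$ and rewrite it as $\mathbb{P}_i(\exp(P_i/P_j) > \exp(\eta)) \leq \epsilon/2$. Markov's inequality then gives an upper bound of $\mathbb{E}_i[\exp(P_i/P_j)]/\exp(\eta)$, and the identity stated just before the lemma, $\mathbb{E}_i[\exp(P_i/P_j)] = [1 + D_{f_1}(P_i\|P_j)]\,e$, turns this into a requirement of the form $e\,[1 + D_{f_1}(P_i\|P_j)]/\exp(\eta) \leq \epsilon/2$. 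Solving for $\eta$ yields $\eta_{i,j}(\epsilon) \leq \log(2/\epsilon) + 1 + \log(1 + D_{f_1}(P_i\|P_j))$, which is the middle expression in \eqref{eqn:Mdivproof}. The second inequality in \eqref{eqn:Mdivproof} is then a purely arithmetic step: since $M_{ij} = 1 + \log(1 + D_{f_1}(P_i\|P_j)) \geq 1$ and, under the mild assumption $\epsilon \leq 2/e$, also $\log(2/\epsilon) \geq 1$, the elementary inequality $a + b \leq 2ab$ for $a, b \geq 1$ closes the chain.

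For \eqref{eqn:Ddivproof}, the idea is the same but carried out symmetrically under the two evidence distributions and with absolute values inside the exponential to accommodate the signed quantity $(P_i/P_j)(P_s/P_{s'} - 1)$. Let $W$ denote this random variable. Applying Markov to $\exp|W|$ under $\mathbb{P}_{i,s}$ and $\mathbb{P}_{i,s'}$ separately and summing yields
\begin{equation*}
\mathbb{P}_{i,s}(|W| > \gamma) + \mathbb{P}_{i,s'}(|W| > \gamma) \leq \frac{\mathbb{E}_{i,s}[\exp|W|] + \mathbb{E}_{i,s'}[\exp|W|]}{\exp(\gamma)} = \frac{\exp(D_{ij}^{s,s'})}{\exp(\gamma)},
\end{equation*}
where the last equality is exactly the definition of $D_{ij}^{s,s'}$. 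Requiring the bound to be at most $\epsilon/2$ and solving gives $\gamma_{i,j}^{s,s'}(\epsilon) \leq \log(2/\epsilon) + D_{ij}^{s,s'}$, and one more application of $a + b \leq 2ab$ promotes this to $2\log(2/\epsilon)\,D_{ij}^{s,s'}$.

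The main obstacle is bookkeeping in the second bound. Unlike the one-sided quantity in \citet{sen2017identifying}, the random variable $W$ can be negative, so one must exponentiate $|W|$ rather than $W$ itself, and the definition of $D_{ij}^{s,s'}$ has been tailored precisely so that the sum of the two tail probabilities matches $\exp(D_{ij}^{s,s'})$ after Markov. A secondary nuisance is verifying the conditions under which $a + b \leq 2ab$ applies; here $D_{ij}^{s,s'} \geq \log 2$ always holds because each expectation inside the log is $\geq 1$ by $\exp|W| \geq 1$, so a small constant condition on $\epsilon$ (or on the product $\log(2/\epsilon)\,D_{ij}^{s,s'}$) suffices to close the argument.
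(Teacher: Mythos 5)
Your proposal is correct and follows essentially the same route as the paper: a Markov/Chernoff bound on the exponentiated (absolute) importance weight, the identity $\mathbb{E}_i[\exp(P_i/P_j)]=[1+D_{f_1}(P_i\|P_j)]e$ (resp.\ the definition of $D_{i,j}^{s,s'}$ as the log of the sum of the two exponential moments), and then $a+b\leq 2ab$ to pass from the additive to the multiplicative form. The paper simply defers the first bound to \citet{sen2017identifying} and invokes the same condition on $a,b$ for the last step, so your explicit caveat that $\epsilon$ must be small enough (and that $D_{i,j}^{s,s'}\geq \log 2$) is consistent with, and slightly more careful than, the paper's own argument.
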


\begin{proof}

The similar proof for \eqref{eqn:Mdivproof} is done in ~\citet{sen2017identifying}, here we extend the proof for Equation \eqref{eqn:Ddivproof}.

With moment generating function, we have the bound
\begin{align}
    &\mathbb{P}_{i,s}(\Big|\frac{P_i(V|pa(V))}{P_j(V|pa(V))}(\prod_{X_k\in ch(S)}\frac{P(X_k)|pa(X_k)\setminus S, S=s)}{P(X_k)|pa(X_k)\setminus S, S=s')}-1)\Big| >\gamma) +  \\ \nonumber 
    & \mathbb{P}_{i,s'}(\Big|\frac{P_i(V|pa(V))}{P_j(V|pa(V))}(\prod_{X_k\in ch(S)}\frac{P(X_k)|pa(X_k)\setminus S, S=s)}{P(X_k)|pa(X_k)\setminus S, S=s')}-1)\Big| >\gamma) \\ \nonumber
    &\quad \quad \leq \exp(-\gamma)\Bigg(\mathbb{E}_{i,s}[\exp|\frac{P_i}{P_j}(\frac{P_s}{P_{s'}}-1)|] + \mathbb{E}_{i,s'}[\exp|\frac{P_i}{P_j}(\frac{P_s}{P_{s'}}-1)|]\Bigg). 
\end{align}
Suppose the right hand side is smaller than $\frac{\epsilon}{2}$, we have 
\begin{align}
    \gamma \geq \log(\frac{2}{\epsilon}) + \log\Bigg(\mathbb{E}_{i,s}[\exp|\frac{P_i}{P_j}(\frac{P_s}{P_{s'}}-1)|] + \mathbb{E}_{i,s'}[\exp|\frac{P_i}{P_j}(\frac{P_s}{P_{s'}}-1)|]\Bigg)
\end{align}

By the definition of $\gamma_{i,j}^{s,s'}(\epsilon)$, we have:
\begin{align}
    \gamma_{i,j}^{s,s'}(\epsilon) &\leq \log(\frac{2}{\epsilon}) + \log\Bigg(\mathbb{E}_{i,s}[\exp|\frac{P_i}{P_j}(\frac{P_s}{P_{s'}}-1)|] + \mathbb{E}_{i,s'}[\exp|\frac{P_i}{P_j}(\frac{P_s}{P_{s'}}-1)|]\Bigg) \\ \nonumber
    & \leq 
    2\log(\frac{2}{\epsilon}) D_{i,j}^{s,s'} 
\end{align}
where the last inequality follows from $a + b \leq 2 ab$ for $a \geq 1$ and $b \geq \log_2 2$.
\end{proof}

Lemma~A.1 and Lemma~A.2 directly implies the following theorem. 

\begin{theorem}
\begin{align}
    \mathbb{P}(\mu_i-\delta-\frac{\epsilon}{2}\leq \hat{Y}_i^{(\eta)}(j) \leq \mu_i + \delta) \geq 1-2\exp(-\frac{\delta^2 t}{8\log(\frac{2}{\epsilon})^2{M_{i,j}}^2})
\end{align}

\begin{align}
    \mathbb{P}(\zeta_{i,j}^{s,s'}-\delta-\frac{\epsilon}{2}\leq\hat{\zeta}_{i,j}^{s,s'}(\gamma) \leq \zeta_{i,j}^{s,s'} +\delta+\frac{\epsilon}{2}) \geq 1 - 2\exp(-\frac{\delta^2 t}{32\log(\frac{2}{\epsilon})^2 {D_{i,j}^{s,s'}}^2})
\end{align}
\end{theorem}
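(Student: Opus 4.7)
The plan is to combine the two preceding lemmas in a direct substitution argument, with essentially no new probabilistic content. First I would take the Azuma–Hoeffding inequalities already established in Lemma~A.1, namely
\begin{align*}
    \mathbb{P}\!\left(\abs{\hat{Y}_i^{(\eta)}(j) - \mathbb{E}_j[\hat{Y}_i^{(\eta)}(j)]} > \delta\right) &\leq 2\exp\!\left(-\tfrac{\delta^2 t}{2\eta_{i,j}(\epsilon)^2}\right), \\
    \mathbb{P}\!\left(\abs{\hat{\zeta}_{i,j}^{s,s'}(\gamma) - \mathbb{E}_{j,s'}\hat{\zeta}_{i,j}^{s,s'}(\gamma)} > \delta\right) &\leq 2\exp\!\left(-\tfrac{\delta^2 t}{8\gamma_{i,j}^{s,s'}(\epsilon)^2}\right),
\end{align*}
and then plug in the $f$-divergence bounds $\eta_{i,j}(\epsilon) \leq 2\log(2/\epsilon)\, M_{i,j}$ and $\gamma_{i,j}^{s,s'}(\epsilon) \leq 2\log(2/\epsilon)\, D_{i,j}^{s,s'}$ from Lemma~A.2. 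Squaring and substituting into the exponent yields the tail probabilities $2\exp\!\left(-\tfrac{\delta^2 t}{8\log(2/\epsilon)^2 M_{i,j}^2}\right)$ and $2\exp\!\left(-\tfrac{\delta^2 t}{32\log(2/\epsilon)^2 (D_{i,j}^{s,s'})^2}\right)$ exactly as claimed.

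Second, I would translate concentration around the expectation into concentration around the target quantity using the bias bounds \eqref{eqn:ybound} and \eqref{eqn:zetabound}. For the reward estimator, on the event $\abs{\hat{Y}_i^{(\eta)}(j) - \mathbb{E}_j[\hat{Y}_i^{(\eta)}(j)]} \leq \delta$ and using $\mathbb{E}_j[\hat{Y}_i^{(\eta)}(j)] \leq \mu_i \leq \mathbb{E}_j[\hat{Y}_i^{(\eta)}(j)] + \tfrac{\epsilon}{2}$, an upper bound on $\hat Y$ gives $\hat Y \leq \mathbb{E}[\hat Y] + \delta \leq \mu_i + \delta$, while a lower bound gives $\hat Y \geq \mathbb{E}[\hat Y] - \delta \geq \mu_i - \delta - \tfrac{\epsilon}{2}$. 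This produces the asymmetric envelope $\mu_i - \delta - \tfrac{\epsilon}{2} \leq \hat Y_i^{(\eta)}(j) \leq \mu_i + \delta$ stated in the theorem. The one-sidedness of the shift is inherited directly from the one-sided clipping bias of the positive importance weight.

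Third, for the fairness estimator the same argument goes through but the bias in \eqref{eqn:zetabound} is two-sided, namely $\mathbb{E}\hat\zeta - \tfrac{\epsilon}{2} \leq \zeta_{i,j}^{s,s'} \leq \mathbb{E}\hat\zeta + \tfrac{\epsilon}{2}$, because the importance-weighted integrand can be positive or negative and the discarded tail could pull the expectation in either direction. Consequently the translation from $\mathbb{E}\hat\zeta$ to $\zeta$ adds $\pm\tfrac{\epsilon}{2}$ on both sides, yielding the symmetric envelope $\zeta_{i,j}^{s,s'} - \delta - \tfrac{\epsilon}{2} \leq \hat\zeta_{i,j}^{s,s'}(\gamma) \leq \zeta_{i,j}^{s,s'} + \delta + \tfrac{\epsilon}{2}$.

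There is no real obstacle here; the proof is a bookkeeping exercise. The only subtle point worth flagging is keeping the bias directions straight between the two estimators, since the reward bound is asymmetric while the fairness bound is symmetric — a discrepancy that traces back to whether the clipped-out mass is nonnegative (as for $Y P_i/P_j$) or signed (as for $Y(P_i/P_j)(P_s/P_{s'}-1)$). A final union bound is not needed because each statement is about a single estimator, so the factor of $2$ in the exponential simply comes from the two-sided Azuma–Hoeffding deviation.
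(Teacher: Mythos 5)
Your proposal is correct and matches the paper's own argument, which simply notes that Lemma~A.1 and Lemma~A.2 directly imply the theorem: the constants $8$ and $32$ arise exactly as you compute from substituting $\eta_{i,j}(\epsilon)\leq 2\log(2/\epsilon)M_{i,j}$ and $\gamma_{i,j}^{s,s'}(\epsilon)\leq 2\log(2/\epsilon)D_{i,j}^{s,s'}$ into the Azuma--Hoeffding exponents, and your handling of the one-sided versus two-sided clipping bias is the same bookkeeping the paper relies on.
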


\subsection{Aggregating Estimators from Different Arms}
Let $\mathcal{T}_k \subset \{1,2,\cdots, \tau\}$ be the indices of all the samples collected from arm $k$, $\mathcal{T}_{k,s} \subset \{1,2,\cdots, \tau\}$ be the indices of all the samples collected from arm $k$ with sensitive attribute $s$. Recall the estimators $\hat{Y}_k^\epsilon$ and $\hat{\zeta}_k^{s,s'}(\epsilon)$ defined in the main paper:

\begin{align}
    \hat{Y}_k^\epsilon = \frac{1}{Z_k}\sum_{j=0}^K\sum_{m\in\mathcal{T}_j}\frac{1}{M_{kj}}Y_j(m)\frac{P_k(V_j(m)|pa(V)_j(m))}{P_j(V_j(m)|pa(V)_j(m))}\mathbb{I}\Big\{\frac{P_k(V_j(m)|pa(V)_j(m))}{P_j(V_j(m)|pa(V)_j(m))}\leq 2\log(\frac{2}{\epsilon})M_{kj}\Big\}
\end{align}

\begin{align}
    \hat{\zeta}_k^{s,s'}(\epsilon) = \frac{1}{O_k^{s,s'}}\sum_{j=0}^K\sum_{m\in\mathcal{T}_{j,s'}}\frac{1}{D_{kj}^{s,s'}}Y_{j,s'}(m)\frac{P_k}{P_j}(\frac{P_s}{P_{s'}} - 1)\mathbb{I}\Big\{|\frac{P_k}{P_j}(\frac{P_s}{P_{s'}} - 1)|\leq 2\log(\frac{2}{\epsilon})D_{kj}^{s,s'}\Big\}
\end{align}

where $Z_k=\sum_j\frac{\tau_j}{M_{kj}}$, $O_k^{s,s'} = \sum_j\frac{\tau_{j,s'}}{D_{kj}^{s,s'}}$
\begin{lemma}
\begin{align}
    \mathbb{E}\hat{Y}_k^\epsilon\leq \mu_k \leq \mathbb{E}\hat{Y}_k^\epsilon + \frac{\epsilon}{2}
\end{align}

\begin{align}
    \mathbb{E}\hat{\zeta}_{i,j}^{s,s'}(\epsilon)-\frac{\epsilon}{2}\leq \zeta_{i,j}^{s,s'} \leq \mathbb{E}\hat{\zeta}_{i,j}^{s,s'}(\epsilon) + \frac{\epsilon}{2}
\end{align}
\end{lemma}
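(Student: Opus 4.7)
The plan is to recognize that $\hat{Y}_k^\epsilon$ and $\hat{\zeta}_k^{s,s'}(\epsilon)$ are convex combinations of per-arm clipped estimators of the form already analyzed in Lemma~A.1, so the expectation bounds lift from single-arm to aggregated estimators by linearity. Set $\eta_j := 2\log(2/\epsilon)M_{kj}$ and $w_j := (\tau_j/M_{kj})/Z_k$. First I would verify by direct algebraic rearrangement that
\begin{align*}
    \hat{Y}_k^\epsilon \;=\; \sum_{j=1}^K w_j\,\hat{Y}_k^{(\eta_j)}(j),
    \qquad \sum_{j=1}^K w_j \;=\; \frac{1}{Z_k}\sum_{j} \frac{\tau_j}{M_{kj}} \;=\; 1,
\end{align*}
where $\hat{Y}_k^{(\eta_j)}(j)$ is the per-arm clipped estimator defined in Eq.~\eqref{eqn:yest} using clipping threshold $\eta_j$ instead of $\eta_{k,j}(\epsilon)$. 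The only subtlety is that the threshold in $\hat{Y}_k^\epsilon$ differs from the $\eta_{k,j}(\epsilon)$ used in Lemma~A.1, so the per-arm statement needs to be restated with $\eta_j$ in place of $\eta_{k,j}(\epsilon)$.

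Next I would re-examine the single-arm bound of Lemma~A.1 to confirm it holds under the enlarged threshold $\eta_j$. The only property of the clipping threshold used in that proof is the tail bound $\mathbb{P}_k(P_k/P_j > \text{threshold}) \le \epsilon/2$, which by Lemma~A.2 still holds because $\eta_j = 2\log(2/\epsilon)M_{kj} \ge \eta_{k,j}(\epsilon)$; enlarging the threshold can only weaken the clipping bias. The lower bound $\mathbb{E}_j[\hat{Y}_k^{(\eta_j)}(j)] \le \mu_k$ is immediate since the unclipped ratio estimator is unbiased for $\mu_k$ (assuming $Y \ge 0$) and clipping discards only non-negative contributions; the upper bound $\mu_k \le \mathbb{E}_j[\hat{Y}_k^{(\eta_j)}(j)] + \epsilon/2$ follows from the tail-bound argument exactly as in Lemma~A.1.

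Then I would combine: by linearity of expectation and $\sum_j w_j = 1$,
\begin{align*}
    \mathbb{E}\hat{Y}_k^\epsilon \;=\; \sum_j w_j\, \mathbb{E}_j[\hat{Y}_k^{(\eta_j)}(j)] \;\le\; \sum_j w_j \mu_k \;=\; \mu_k,
\end{align*}
and analogously $\mu_k - \epsilon/2 \le \mathbb{E}\hat{Y}_k^\epsilon$. The fairness estimator is handled identically: write $\hat{\zeta}_k^{s,s'}(\epsilon) = \sum_j w_j'\, \hat{\zeta}_{k,j}^{s,s'}(\gamma_j)$ with $\gamma_j := 2\log(2/\epsilon)D_{kj}^{s,s'}$ and $w_j' := (\tau_{j,s'}/D_{kj}^{s,s'})/O_k^{s,s'}$, invoke Lemma~A.2's inequality $\gamma_j \ge \gamma_{k,j}^{s,s'}(\epsilon)$ to transfer the single-arm bound Eq.~\eqref{eqn:zetabound} to threshold $\gamma_j$, and average.

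I do not expect a real obstacle here: the argument is essentially bookkeeping for a convex combination, plus a one-line observation that enlarging the clipping threshold does not invalidate either direction of Lemma~A.1. The mildly delicate point is being careful that the single-arm bound on the fairness estimator has two-sided bias $\pm \epsilon/2$ (because $\zeta$ can be negative and the two-sided tail event bundles both $s$ and $s'$ evidence in the definition of $\gamma_{k,j}^{s,s'}$), but this is already the content of Eq.~\eqref{eqn:zetabound} and carries through verbatim under the enlarged threshold.
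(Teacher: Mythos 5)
Your proposal is correct and is essentially the paper's own argument made explicit: the paper's proof simply notes that, by Lemma~A.2, the clipping thresholds $2\log(2/\epsilon)M_{kj}$ and $2\log(2/\epsilon)D_{kj}^{s,s'}$ dominate $\eta_{k,j}(\epsilon)$ and $\gamma_{k,j}^{s,s'}(\epsilon)$, so the bias analysis of Lemma~A.1 carries over to the aggregated estimators, which is exactly your convex-combination-plus-enlarged-threshold argument (with the same implicit boundedness/nonnegativity assumption on $Y$ inherited from the single-arm lemma). No gap; your write-up just spells out the weighting $w_j=(\tau_j/M_{kj})/Z_k$ that the paper leaves implicit.
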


\begin{proof}
With Lemma~A.2, the estimators (24) and (25) satisfy the assumption of Lemma A.1, whose proof carries over. 
\end{proof}

\begin{theorem}
\begin{align}
    \mathbb{P}(\mu_k-\delta-\frac{\epsilon}{2}\leq \hat{Y}_k^\epsilon\leq \mu_k+\delta) \geq 1-2\exp(-\frac{\delta^2\tau}{8\log(\frac{2}{\epsilon})^2}(\frac{Z_k}{\tau})^2)
\end{align}

\begin{align}
    \mathbb{P}(\zeta_{k}^{s,s'}-\delta-\frac{\epsilon}{2}\leq\hat{\zeta}_{k}^{s,s'}(\epsilon) \leq \zeta_{k}^{s,s'} +\delta+\frac{\epsilon}{2}) \geq 1 - 2\exp(-\frac{\delta^2 \tau_{s'}}{32\log(\frac{2}{\epsilon})^2}(\frac{O_k^{s,s'}}{\tau_{s'}})^2)
\end{align}
\end{theorem}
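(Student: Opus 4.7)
The plan is to combine the bias control from Lemma~A.3 with a Hoeffding-type concentration argument applied to the weighted sums defining $\hat Y_k^\epsilon$ and $\hat \zeta_k^{s,s'}(\epsilon)$. Concretely, for the outcome estimator I would view $Z_k\hat Y_k^\epsilon$ as a sum of $\tau$ bounded summands, one per sample, where the $(j,m)$-th summand
\begin{align*}
X_{j,m} \;=\; \tfrac{1}{M_{kj}}\, Y_j(m)\,\tfrac{P_k(V_j(m)|pa(V)_j(m))}{P_j(V_j(m)|pa(V)_j(m))}\,\mathbb{I}\bigl\{\tfrac{P_k}{P_j}\le 2\log(2/\epsilon)M_{kj}\bigr\}
\end{align*}
lies in $[0,\,2\log(2/\epsilon)]$ by virtue of $Y\in[0,1]$ and the clipping threshold. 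The prefactor $1/M_{kj}$ is chosen precisely to cancel the arm-dependent scale of the importance weight, so every summand obeys the same uniform range regardless of which arm generated it. For $O_k^{s,s'}\hat \zeta_k^{s,s'}(\epsilon)$ the analogous summand satisfies $|X_{j,m}|\le 2\log(2/\epsilon)$ but now can take both signs due to the $P_s/P_{s'}-1$ factor, so it lies in a symmetric interval of width $4\log(2/\epsilon)$.

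Next I would apply Hoeffding's inequality (equivalently Azuma--Hoeffding to the martingale of partial sums, with predictable weights given by the phase allocation). For the outcome estimator, the $\tau$ summands have width $2\log(2/\epsilon)$, and choosing $t=Z_k\delta$ gives
\begin{align*}
\mathbb{P}\bigl(\abs{\hat Y_k^\epsilon - \mathbb{E}\hat Y_k^\epsilon}>\delta\bigr)\;\le\; 2\exp\Bigl(-\tfrac{Z_k^{\,2}\delta^2}{8\,\tau\,\log(2/\epsilon)^2}\Bigr)\;=\;2\exp\Bigl(-\tfrac{\delta^2\tau}{8\log(2/\epsilon)^2}(Z_k/\tau)^2\Bigr),
\end{align*}
which recovers the constant $8$ in the theorem statement. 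For $\hat\zeta_k^{s,s'}(\epsilon)$ the centered summands have width $4\log(2/\epsilon)$, and squaring this width in the denominator of the Hoeffding bound turns the $8$ into $32$; this is exactly the origin of the asymmetry in the constants between the two bounds.

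Finally I would combine the mean-centered concentration with Lemma~A.3, which gives the one-sided bias $\mathbb{E}\hat Y_k^\epsilon \le \mu_k \le \mathbb{E}\hat Y_k^\epsilon + \epsilon/2$ and the two-sided bias $\abs{\mathbb{E}\hat\zeta_k^{s,s'}(\epsilon)-\zeta_k^{s,s'}}\le \epsilon/2$. Translating a $\pm\delta$ event around $\mathbb{E}\hat Y_k^\epsilon$ into an event around $\mu_k$ inflates only the lower end by $\epsilon/2$ (producing the asymmetric window $[\mu_k-\delta-\epsilon/2,\,\mu_k+\delta]$), whereas for the fairness estimator the two-sided bias inflates both ends, giving $[\zeta_k^{s,s'}-\delta-\epsilon/2,\,\zeta_k^{s,s'}+\delta+\epsilon/2]$. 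Both conclusions then read off directly.

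The main obstacle I expect is articulating the independence/martingale structure cleanly when samples from the $K$ arms are pooled and the allocation $\{\tau_j\}$, $\{\tau_{j,s'}\}$ is itself determined by earlier phases. The cleanest fix is to condition on the allocation at the start of each phase: within a phase the per-arm samples are i.i.d.\ and cross-arm samples are independent, so the weighted sum is a sum of independent bounded variables and Hoeffding applies directly with the constants above. Alternatively one can treat the cumulative process as a martingale with predictable weights $1/M_{kj}$ (resp.\ $1/D_{kj}^{s,s'}$) and invoke Azuma; both routes yield identical constants because the uniform sample-wise bound is all that enters.
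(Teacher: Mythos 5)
Your proposal is correct and follows essentially the same route as the paper: normalize the aggregated sum so that each clipped, $1/M_{kj}$- (resp.\ $1/D_{kj}^{s,s'}$-) weighted summand is uniformly bounded by $2\log(2/\epsilon)$, apply a Chernoff/Azuma--Hoeffding bound to the rescaled estimator $\frac{Z_k}{\tau}\hat Y_k^\epsilon$ (resp.\ $\frac{O_k^{s,s'}}{\tau_{s'}}\hat\zeta_k^{s,s'}$), convert back to pick up the $(Z_k/\tau)^2$ factor, and combine with the one-sided/two-sided bias bounds of Lemma~A.3. Your accounting for the $8$ versus $32$ constants via the signed versus nonnegative summand ranges matches the paper's, and your remark on conditioning on the phase allocation is a detail the paper leaves implicit.
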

\begin{proof}
In the summation, each term is bounded by $2\log(\frac{2}{\epsilon})$, let $\bar{Y}_k^\epsilon = \frac{Z_k}{\tau}\hat{Y}_k^\epsilon$, $\bar{\zeta}_{k}^{s,s'}(\epsilon) = \frac{O_k^{s,s'}}{\tau_{s'}}\hat{\zeta}_{k}^{s,s'}(\epsilon)$, by Chernoff's bound, 
\begin{align}
    &\mathbb{P}(|\bar{\zeta}_{k}^{s,s'}(\epsilon)-\bar{\zeta}_k| > \delta) \leq 2\exp(-\frac{\delta^2\tau_{s'}}{32\log(\frac{2}{\epsilon})^2}) \\ \nonumber
    & \rightarrow \mathbb{P}(|\hat{\zeta}_{k}^{s,s'}(\epsilon)-\hat{\zeta}_k| > \delta) \leq 2\exp(-\frac{\delta^2\tau_{s'}}{32\log(\frac{2}{\epsilon})^2}(\frac{O_k^{s,s'}}{\tau_{s'}})^2)
\end{align}
Similarly, we get the concentration bound for $\hat{Y}_k$.
\end{proof}

\subsection{Budget Allocation}
\begin{lemma}
The allocation in Algorithm ensures $\frac{Z_k}{\tau}, \frac{O_k^{s,s'}}{\tau}, \frac{O_k^{s',s}}{\tau} \geq \frac{1}{\sigma^\star(B,R)} = v^\star(B,R)$, for all $k$
\end{lemma}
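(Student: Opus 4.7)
The plan is to unpack the definitions in Algorithm 3 and match each quantity $Z_k/\tau$, $O_k^{s,s'}/\tau$, $O_k^{s',s}/\tau$ with one of the vectors being minimized inside the max-min objective. First, I would pin down the structure of the $K \times K$ matrices $\mathbf{A}, \mathbf{B}, \mathbf{C}$: for the LP to be a meaningful proxy for controlling the concentration bounds in Theorem A.3, they must have $(k,j)$-entries $1/M_{kj}$, $1/D_{kj}^{s,s'}$, $1/D_{kj}^{s',s}$ respectively, so that $[\mathbf{A}\nu_Y]_k = \sum_j \nu_{Y,j}/M_{kj}$ and analogously for the other two products.

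Second, I would substitute the algorithm's output $\tau_{Y,j} = \nu_{Y,j}^\star \tau$, $\tau_{s,j} = \nu_{s,j}^\star \tau$, $\tau_{s',j} = \nu_{s',j}^\star \tau$ directly into the definitions $Z_k = \sum_j \tau_{Y,j}/M_{kj}$, $O_k^{s,s'} = \sum_j \tau_{s',j}/D_{kj}^{s,s'}$, $O_k^{s',s} = \sum_j \tau_{s,j}/D_{kj}^{s',s}$. Dividing by $\tau$ gives the clean identifications
$$\frac{Z_k}{\tau} = [\mathbf{A}\nu_Y^\star]_k, \qquad \frac{O_k^{s,s'}}{\tau} = [\mathbf{B}\nu_{s'}^\star]_k, \qquad \frac{O_k^{s',s}}{\tau} = [\mathbf{C}\nu_s^\star]_k.$$

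Third, I would invoke the definition of $v^\star(B,\mathcal{R})$. Since $\nu^\star$ is a maximizer of the objective subject to the budget and probability-simplex constraints, its value on every arm $k \in \mathcal{R}$ is by definition at least the overall minimum, i.e.
$$\min\!\bigl([\mathbf{A}\nu_Y^\star]_k,\, [\mathbf{B}\nu_{s'}^\star]_k,\, [\mathbf{C}\nu_s^\star]_k\bigr) \;\geq\; \min_{k' \in \mathcal{R}} \min\!\bigl([\mathbf{A}\nu_Y^\star]_{k'},\, [\mathbf{B}\nu_{s'}^\star]_{k'},\, [\mathbf{C}\nu_s^\star]_{k'}\bigr) \;=\; v^\star(B,\mathcal{R}).$$
Combining with the identifications above yields the claim for all three ratios simultaneously.

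The only step that requires care rather than routine algebra is interpreting the compressed notation $[\mathbf{A}\nu_Y, \mathbf{B}\nu_s, \mathbf{C}\nu_{s'}]_k$ inside the max-min: the natural reading, and the one that makes the lemma come out cleanly, is that at each arm index $k$ the entries of all three vectors are bundled together and then minimized. Once that reading is fixed, the lemma is essentially a tautology about the LP's optimal value, so there is no genuine obstacle beyond bookkeeping.
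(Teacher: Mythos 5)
Your proof is correct and is exactly the intended argument: the paper states this lemma without any proof, since once $\mathbf{A},\mathbf{B},\mathbf{C}$ are read as the reciprocal divergence matrices (entries $1/M_{kj}$, $1/D_{kj}^{s,s'}$, $1/D_{kj}^{s',s}$), the claim follows immediately by substituting $\tau_{Y,j}=\nu^\star_{Y,j}\tau$, $\tau_{s,j}=\nu^\star_{s,j}\tau$, $\tau_{s',j}=\nu^\star_{s',j}\tau$ into the definitions of $Z_k$, $O_k^{s,s'}$, $O_k^{s',s}$ and observing that the max-min optimal value $v^\star(B,\mathcal{R})$ lower-bounds every coordinate of the objective at the optimizer. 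Your resolution of the notational ambiguities --- taking reciprocal entries rather than the divergences themselves, and pairing $\mathbf{B}$ with $\nu_{s'}$ and $\mathbf{C}$ with $\nu_s$ despite the literal $[\mathbf{A}\nu_Y,\mathbf{B}\nu_s,\mathbf{C}\nu_{s'}]_k$ in Algorithm 3 --- is the reading consistent with how the lemma is later used (e.g., $O_k^{s,s'}$ is normalized by $\tau_{s'}(l)$ in the phase-wise concentration bounds), so no gap remains.
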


\subsection{Online Analysis}
Since the infeasible arms' regret is undefined, simple regret is not meaningful in our case. We will only bound the probability of error $e(T,B)$. We define some quantities for the analysis:

\begin{itemize}
    \item $\mathcal{R}(l):$ Set of arms remaining after phase $l-1$
    \item $\hat{\zeta}_k^{s,s'}(l), \hat{\zeta}_k^{s,s'}(l):$ Fairness estimator for arm k at the end of phase $l$
    \item $\hat{Y}_k(l):$ Estimator of reward of arm $k$ at the end of phase $l$
    \item $\mathcal{F}(l)\subset\mathcal{R}(l):$ Arms that meet fairness constraint with high probabilities at the end of phase $l$, given by 
    \begin{align}
        \Big\{k\in \mathcal{R}(l): \hat{\zeta}_k^{s,s'} + \frac{3}{2^l} < \mathcal{E} \land \hat{\zeta}_k^{s,s'} - \frac{3}{2^l} > -\mathcal{E} \land \hat{\zeta}_k^{s',s} + \frac{3}{2^l} < \mathcal{E} \land \hat{\zeta}_k^{s',s} - \frac{3}{2^l} > -\mathcal{E}\Big\}
    \end{align}
    \item $\hat{Y}_H(l):$ Highest estimator of reward in $\mathcal{F}(l)$ at the end of phase $l$
    \item $\mathcal{A}(l)\subset \mathcal{R}(l):$  
    \begin{align}
        \Big\{
        k\in\mathcal{R}(l): 
        \Big(Y_{k^\star} - Y_k \geq \frac{10}{2^l}\Big) \lor \Big(\zeta_k^{s,s'}-\frac{6}{2^l}>\mathcal{E} \lor 
        \zeta_k^{s,s'}+\frac{6}{2^l}<-\mathcal{E}
        \Big) \lor
        \Big(\zeta_k^{s',s}-\frac{6}{2^l}>\mathcal{E} \lor 
        \zeta_k^{s',s}+\frac{6}{2^l}<-\mathcal{E}
        \Big)
        \Big\}
    \end{align}
    \item $S_l:$ Success event of phase $l$ defined as:
    \begin{align}
        \Bigg\{\underset{k\in\mathcal{R}(l)}{\bigcap}\Big\{
        \mu_k - \frac{3}{2^{l}} \leq 
        \hat{Y}_k(l)\leq \mu_k + \frac{1}{2^{l-1}}
        \Big\} \cap \{k^\star \in \mathcal{F}(l)\} \Bigg\}
        \bigcap \\
        \qquad \Bigg\{
        \bigcap_{k\in\mathcal{R}(l), k\neq k^\star}
        \Big\{
        |\hat{\zeta}_k^{s,s'} - \zeta_k^{s,s'}| < \frac{3}{2^l}
        \cap 
        |\hat{\zeta}_k^{s',s} - \zeta_k^{s',s}| < \frac{3}{2^l}
        \Big\} \Bigg\}
    \end{align}
    \item $E_l:$ Safe event of phase $l$ defined as:
    \begin{align}
        \Bigg\{\underset{k\in\mathcal{R}(l)}{\bigcap}\Big\{
        \mu_k - \frac{3}{2^{l}} \leq 
        \hat{Y}_k(l)\leq \mu_k + \frac{1}{2^{l-1}}
        \Big\} 
        \bigcap
        \Bigg\{
        \bigcap_{k\in\mathcal{R}(l), k\neq k^\star}
        \Big\{
        |\hat{\zeta}_k^{s,s'} - \zeta_k^{s,s'}| < \frac{3}{2^l}
        \cap 
        |\hat{\zeta}_k^{s',s} - \zeta_k^{s',s}| < \frac{3}{2^l}
        \Big\} \Bigg\}
    \end{align}
\end{itemize}

If $S_l$ happens, arms in $\mathcal{A}_l$ is eliminated and optimal arm survived.
\begin{align}
    \hat{Y}_H \geq \hat{Y}_k \geq \mu_k-\frac{3}{2^{l}}; \quad
    \hat{Y}_k(l) \leq \mu_k + \frac{1}{2^{l-1}}
\end{align}
This implies $\hat{Y}_H -\hat{Y}_k \geq \frac{5}{2^l}$

\begin{align}
    \zeta_k^{s,s'} + \frac{6}{2^l} > \mathcal{-E};\quad\zeta_k^{s,s'} - \frac{6}{2^l} > \mathcal{E}; \quad |\hat{\zeta}_k^{s,s'}- \zeta_k^{s,s'}| < \frac{3}{2^l}
\end{align}
implies $\hat{\zeta}_k^{s,s'} - \frac{3}{2^l} > \mathcal{E}$, $\hat{\zeta}_k^{s,s'} + \frac{3}{2^l} < \mathcal{E}$, by symmetry the bounds with $s$ and $s'$ exchanged, so the proof also holds. 

If $E_{l}$ happens, the optimal arm is survived, given $E_l$,

\begin{align}
    |\hat{\zeta}_k^{s,s'}- \zeta_k^{s,s'}| < \frac{3}{2^l}; \quad -\mathcal{E} < \zeta_k^{s,s'} < \mathcal{E};
\end{align}

therefore $-\mathcal{E} - \frac{3}{2^l}< \hat{\zeta}_k^{s,s'} < \mathcal{E} + \frac{3}{2^l}$, also we know if $k\in \mathcal{F}(l)$, then $k\in\mathcal{F}$, so $Y_k \leq Y_{k^\star}$,  
\begin{align}
    \hat{Y}_H(l) \leq \mu_{k^\star} + \frac{1}{2^{l-1}}; \quad \hat{Y}_{k^\star}(l) \geq \mu_{k^\star} - \frac{3}{2^{l}}
\end{align}
which implies $\hat{Y}_H(l) - \hat{Y}_{k^\star}(l) \leq \frac{5}{2^l}$, so given $E_l$, optimal arm is safe from elimination.

Let $B_l = \mathbb{P}(S_l^c|S_{1:l-1}), C_l = \mathbb{P}(E_l^c|E_{1:l-1})$
\begin{lemma}
\begin{align}
    C_l = \mathbb{P}(E_l^c|E_{1:l-1}) \leq 6|\mathcal{R}^\star(l)|\exp(-\frac{2^{-2(l-1)}\tau(l)v^\star(B, R(l))^2}{32l^2})
\end{align}

and when $l > \max\{\log\frac{5}{\zeta_{k^\star}^{s,s'}+\mathcal{E}},\log\frac{5}{-\zeta_{k^\star}^{s,s'}+\mathcal{E}} \}$, 
\begin{align}
    B_l = \mathbb{P}(S_l^c|S_{1:l-1}) \leq 8|\mathcal{R}^\star(l)|\exp(-\frac{2^{-2(l-1)}\tau(l)v^\star(B, R(l))^2}{32l^2})
\end{align}
\end{lemma}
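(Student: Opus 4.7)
My plan is to prove both bounds by the same union-bound template: decompose $E_l^c$ (resp.\ $S_l^c$) into a union of failures of individual estimator concentration events, apply Theorem~A.2 to each with phase-tailored parameters, and invoke Lemma~A.3 to lower-bound the effective sample complexity. Concretely, I would pick $\epsilon = 2^{-(l-1)}$ and $\delta = 2/2^l$ throughout phase $l$. This choice makes $\delta + \epsilon/2 = 3/2^l$, matching the fairness tolerance that appears in the definitions of $\mathcal{F}(l)$ and of the elimination rule in $\mathcal{R}$, and it makes the reward concentration window produced by Theorem~A.2 equal to $[\mu_k - 3/2^l,\ \mu_k + 1/2^{l-1}]$, exactly the window used in $E_l$. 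By Lemma~A.3 the allocation returned by Algorithm~3 gives $Z_k/\tau,\ O_k^{s,s'}/\tau,\ O_k^{s',s}/\tau \geq v^*(B, R(l))$ uniformly in $k\in R(l)$. Combined with $\log(2/\epsilon) = l \log 2 \leq l$, each individual failure probability is then at most $2\exp\!\bigl(-2^{-2(l-1)} \tau(l)\, v^*(B, R(l))^2 / (32 l^2)\bigr)$.

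For $C_l$: the safe event's complement $E_l^c$ is a union of at most $|\mathcal{R}(l)|$ reward-concentration failures together with $|\mathcal{R}(l)|-1$ fairness-concentration failures for each of the two counterfactual directions $s\to s'$ and $s'\to s$. Summing these $|\mathcal{R}(l)| + 2(|\mathcal{R}(l)|-1)$ events, each contributing at most $2\exp(\cdot)$, the union bound gives $C_l \leq \bigl(2|\mathcal{R}(l)| + 4(|\mathcal{R}(l)|-1)\bigr)\exp(\cdot) \leq 6|\mathcal{R}^\star(l)|\exp(\cdot)$, which is the claimed bound. The conditioning on $E_{1:l-1}$ is benign because the fresh samples in phase $l$ are drawn independently according to the distributions prescribed by Algorithm~3, so Theorem~A.2 applies phase by phase.

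For $B_l$: since $S_l = E_l \cap \{k^* \in \mathcal{F}(l)\}$, it remains only to control $\mathbb{P}(k^* \notin \mathcal{F}(l))$. The hypothesis $l > \max\{\log(5/(\zeta_{k^*}^{s,s'}+\mathcal{E})),\ \log(5/(-\zeta_{k^*}^{s,s'}+\mathcal{E}))\}$ (together with the symmetric statement for $\zeta_{k^*}^{s',s}$) ensures $\zeta_{k^*}^{s,s'},\ \zeta_{k^*}^{s',s}$ lie at least $5/2^l$ inside the interval $(-\mathcal{E}, \mathcal{E})$. Consequently, if the fairness estimators for $k^*$ are within the Theorem~A.2 tolerance of their means, all four inequalities defining $\mathcal{F}(l)$ hold automatically at $k^*$: e.g.\ $\hat{\zeta}_{k^*}^{s,s'} + 3/2^l \leq \zeta_{k^*}^{s,s'} + (3/2^l) + (3/2^l) < \mathcal{E}$, and the other three are symmetric. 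Adding these two extra two-sided concentration events for $k^*$ to the union bound contributes a further $4\exp(\cdot)$, yielding $B_l \leq 8|\mathcal{R}^\star(l)|\exp(\cdot)$.

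The main obstacle is the arithmetic of margins: the algorithm's slack $3/2^l$, the Theorem~A.2 two-sided deviation of $3/2^l$, and the condition's margin $5/2^l$ must be aligned so that the implication chain for $\{k^* \in \mathcal{F}(l)\}$ closes; in particular one must use the one-sided form of Theorem~A.2 carefully (or absorb a small gap into the clipping bias $\epsilon/2$) to ensure the $5/2^l$ margin genuinely suffices. Once this bookkeeping is done, the rest is a routine union-bound-plus-Lemma~A.3 computation, and the exponential rate in $\tau(l)$ is robust to any slack in the constants.
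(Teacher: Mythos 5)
Your proposal follows essentially the same route as the paper's proof: per-arm union bounds over the reward and two fairness concentration events with $\epsilon = 2^{-(l-1)}$, Theorem A.2 applied to each, Lemma A.3 to lower-bound $Z_k/\tau$ and $O_k^{s,s'}/\tau$ by $v^\star(B,\mathcal{R}(l))$, $\log(2/\epsilon)\le l$, and the extra event $\{k^\star\in\mathcal{F}(l)\}$ accounting for the jump from $6$ to $8$. The margin issue you flag is resolved exactly as you anticipate and as the paper does: for $\{k^\star\in\mathcal{F}(l)\}$ one takes total deviation $2/2^l$ (i.e.\ $\delta = 1/2^l$ plus the clipping bias $\epsilon/2 = 1/2^l$) rather than $3/2^l$, so the $5/2^l$ margin from the hypothesis on $l$ suffices, at the cost of a slightly weaker exponent ($2^{-2l}$ in place of $2^{-2(l-1)}$) in that single term, a constant-factor slack the paper itself absorbs into the stated bound.
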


\begin{proof}
\begin{align}
    &\mathbb{P}(\mu_k-\frac{3}{2^l}\leq \hat{Y} \leq \mu_k +\frac{1}{2^{l-1}})
    \mathbb{P}(\zeta_k^{s,s'}-\frac{3}{2^l}\leq \hat{\zeta}_k^{s,s'} \leq \zeta_k^{s,s'} +\frac{3}{2^{l}}) 
    \mathbb{P}(\zeta_k^{s',s}-\frac{3}{2^l}\leq \hat{\zeta}_k^{s',s} \leq \zeta_k^{s',s} +\frac{3}{2^{l}})
    \\ \nonumber
    &\geq (1-2\exp(-\frac{2^{-2(l-1)}\tau_Y(l)}{8l^2}(\frac{Z_k}{\tau_Y(l)})^2))
    (1-2\exp(-\frac{2^{-2(l-1)}\tau_{s'}(l)}{32l^2}(\frac{O_k^{s,s'}}{\tau_{s'}(l)})^2))
    (1-2\exp(-\frac{2^{-2(l-1)}\tau_s(l)}{32l^2}(\frac{O_k^{s',s}}{\tau_{s}(l)})^2)) \\ \nonumber
    &\geq (1-2\exp(-\frac{2^{-2(l-1)}Z_k^2}{8l^2\tau(l)})
    (1-2\exp(-\frac{2^{-2(l-1)}{O_k^{s,s'}}^2}{32l^2\tau(l)})
    (1-2\exp(-\frac{2^{-2(l-1)}{O_k^{s',s}}^2}{32l^2\tau(l)}) \\ \nonumber
    &\geq (1-2\exp(-\frac{2^{-2(l-1)}\tau(l)}{8l^2}(\frac{Z_k}{\tau(l)})^2))
    (1-2\exp(-\frac{2^{-2(l-1)}\tau(l)}{32l^2}(\frac{O_k^{s,s'}}{\tau(l)})^2))
    (1-2\exp(-\frac{2^{-2(l-1)}\tau(l)}{32l^2}(\frac{O_k^{s',s}}{\tau(l)})^2)) \\ \nonumber    
    &\geq 1 - 6\exp(-\frac{2^{-2(l-1)}\tau(l)v^\star(B, R(l))^2}{32l^2}) \\ \nonumber
\end{align}

Assuming $-\mathcal{E} + \frac{5}{2^l} < \zeta_{k^\star}^{s,s'}<\mathcal{E}-\frac{5}{2^l}$, then $l > \max\{\log\frac{5}{\zeta_{k^\star}^{s,s'}+\mathcal{E}},\log\frac{5}{-\zeta_{k^\star}^{s,s'}+\mathcal{E}} \} = l_0$
\begin{align}
    \mathbb{P}(k^\star\in \mathcal{F}) &= \mathbb{P}(\hat{\zeta}_{k^\star}^{s,s'} + \frac{3}{2^l} < \mathcal{E} \land \hat{\zeta}_{k^\star}^{s,s'} -\frac{3}{2^l} > -\mathcal{E}) \\ \nonumber
    &\geq \mathbb{P}(|\hat{\zeta}_{k^\star}^{s,s'}-\zeta_{k^\star}^{s,s'}| \leq \frac{1}{2^l} + \min\{\mathcal{E} - \frac{4}{2^l}-\zeta_{k^\star}^{s,s'}, \zeta_{k^\star}^{s,s'} + \mathcal{E}-\frac{4}{2^l} \}) \\ \nonumber
     &\geq 1-2\exp(-\frac{2^{-2l}\tau(l)v^\star(B,R(l))^2}{32l^2})
\end{align}
\end{proof}

\begin{align}
\mathbb{P}(S_{1:l}) \geq 1-\sum_{s=1}^{l_0}C_s - \sum_{s=l_0+1}^{l}B_s 
\end{align}

\begin{theorem}
Consider a problem instance with $K$ candidate arms, $\rho_k$ is the optimal gap associated with each arm defined in the proof. Further define
\begin{align}
    \bar{H} = \max_{k\neq k^\star} \frac{\rho_k^3}{2^{-2\rho_k}v^\star(B,R^\star(k))^2}
\end{align}
The error probablity of Algorithm 1 is bounded as 
\begin{align}
        e(T,B) 
    &  \leq  8K^2\rho^\star\exp(-\frac{T}{8\bar{H}\overline{\log}(n(T))})
\end{align}
\end{theorem}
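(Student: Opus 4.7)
The plan is to reduce the probability of error to the probability that some per-phase ``success event'' $S_l$ fails, then use the concentration bounds of Lemma~A.4 phase-by-phase together with the budget allocation guarantee of Lemma~A.3. Concretely, for each suboptimal arm $k \neq k^\star$ I would define $\rho_k$ as the smallest phase index at which $k$ is forced into the elimination set $\mathcal{A}(l)$, i.e.\ the smallest $l$ such that either $Y_{k^\star}-Y_k \ge 10/2^l$, or $|\zeta_k^{s,s'}|-\mathcal{E} \ge 6/2^l$, or $|\zeta_k^{s',s}|-\mathcal{E} \ge 6/2^l$. By the implications worked out just after the definitions of $S_l$ and $E_l$, conditional on $S_l$ holding every arm in $\mathcal{A}(l)$ is deleted from $\mathcal{R}(l+1)$ while $k^\star$ survives. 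Hence, whenever $S_1,\dots,S_{\rho^\star}$ all hold, where $\rho^\star = \max_{k\neq k^\star}\rho_k$, the algorithm's remaining set reduces to $\{k^\star\}$ and the output is correct. This yields the union bound
\begin{equation*}
e(T,B) \;\le\; \sum_{l=1}^{\rho^\star} \mathbb{P}(S_l^c \mid S_{1:l-1}).
\end{equation*}

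Next I would invoke Lemma~A.4, which combined with the budget-allocation guarantee $Z_k/\tau,\, O_k^{s,s'}/\tau,\, O_k^{s',s}/\tau \ge v^\star(B,\mathcal{R}(l))$ from Lemma~A.3 gives (for $l$ past the small threshold where $k^\star$ may not yet be recognized as fair)
\begin{equation*}
\mathbb{P}(S_l^c \mid S_{1:l-1}) \;\le\; 8\,|\mathcal{R}^\star(l)|\exp\!\Bigl(-\tfrac{2^{-2(l-1)}\tau(l)\,v^\star(B,\mathcal{R}(l))^2}{32\,l^2}\Bigr).
\end{equation*}
Since once phase $\rho_k$ is reached arm $k$ is guaranteed to be eliminated, the still-alive suboptimal arms at phase $l$ all satisfy $\rho_k \ge l$; I would use this to upper bound each summand by replacing $v^\star(B,\mathcal{R}(l))$ by the worst pertinent value and $|\mathcal{R}^\star(l)| \le K$. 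Substituting $\tau(l) = T/(l\,\overline{\log}(n(T)))$ the exponent becomes $-\,2^{-2(l-1)} T /\bigl(32\, l^3\,\overline{\log}(n(T))\bigr)\cdot v^\star(B,\mathcal{R}(l))^2$, which I would further upper bound at the worst phase $l=\rho_k$ for each $k$; taking the maximum over $k$ produces exactly the constant $\bar H = \max_{k\neq k^\star}\rho_k^3/\bigl(2^{-2\rho_k}v^\star(B,\mathcal{R}^\star(k))^2\bigr)$ that appears in the statement.

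Finally I would collect terms: summing at most $\rho^\star$ identical-looking bounds, each carrying a factor $K$ from $|\mathcal{R}^\star(l)|$ and another $K$ absorbed into the union over suboptimal arms that determine $\bar H$, gives
\begin{equation*}
e(T,B) \;\le\; 8K^2 \rho^\star \exp\!\Bigl(-\tfrac{T}{8\bar H\,\overline{\log}(n(T))}\Bigr),
\end{equation*}
the claimed bound, once the constant $32$ is rebalanced against the $2^{-2(l-1)}$ factors to yield the $1/8$ in the exponent. The main obstacle I expect is the second step: verifying that the worst-case exponent across all phases $l \le \rho^\star$ is indeed attained at $l=\rho_k$ for the arm realizing $\bar H$, and that the $l^3$ denominator is tight enough to be absorbed into the problem-dependent constant $\bar H$ (requiring a careful monotonicity argument in $l$ using that $\rho_k$ was chosen precisely so that the relevant gap is of order $2^{-\rho_k}$). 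The handling of early phases $l \le l_0$, where $k^\star$ is not yet certified fair (governed by $C_l$ rather than $B_l$), needs a separate but structurally identical bound that contributes only the $6$-vs-$8$ constant difference and so does not change the stated rate.
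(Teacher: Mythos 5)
Your proposal is correct and follows essentially the same route as the paper's proof: a union bound over the per-phase success/safe events $S_l$, $E_l$, the concentration bounds of Lemma~A.4 combined with the allocation guarantee $Z_k/\tau, O_k^{s,s'}/\tau, O_k^{s',s}/\tau \ge v^\star(B,\mathcal{R})$, substitution of $\tau(l)=T/(l\,\overline{\log}(n(T)))$, and then bounding each phase's term by its value at the ideal elimination phase $\rho_k$ to extract $\bar H$, with the $K^2\rho^\star$ prefactor from $|\mathcal{R}^\star(k)|\le K$ and the count of arms and phases. The monotonicity worry you flag is handled exactly as you suspect (and as the paper implicitly does): since $2^{-2l}\tau(l)/l^2 \propto 2^{-2l}/l^3$ is decreasing in $l$, every phase $l\le\rho_k$ is dominated by the exponent evaluated at $\rho_k$, and the early phases $l\le l_0$ governed by $C_l$ only change the constant from $6$ to $8$.
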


\begin{proof}

The phase an arm is ideally deleted is $\rho_k = \min\{\max(SO_k,l_0), F_k^{s,s'}, F_k^{s',s}\}$
where 
\begin{align}
    SO_k:= l \quad\text{if}\quad \frac{10}{2^l} < \Delta_k < \frac{10}{2^{l-1}} \quad \text{else} \quad\infty
\end{align}

\begin{align}
        F_k^{s,s'}:= l \quad\text{if}\quad \frac{6}{2^{l-1}} > \zeta_k^{s,s'} -\mathcal{E} > \frac{6}{2^{l}}  \lor   \frac{6}{2^{l-1}} > -\mathcal{E}-\zeta_k^{s,s'} > \frac{6}{2^{l}} \quad \text{else}\quad\infty
\end{align}

therefore, 
\begin{align}
    &l_k \geq \log\frac{10}{\Delta_k} \quad &\text{if $k$ is Sub-Optimal Arm} \\ \nonumber 
    &l_k\geq \log\frac{6}{\min\{|\zeta_k^{s,s'} -\mathcal{E}|, |\zeta_k^{s,s'} +\mathcal{E}|\}} \quad &\text{if $k$ is $s'$-Unfair Arm} \\ \nonumber 
    &l_k\geq \log\frac{6}{\min\{|\zeta_k^{s',s} -\mathcal{E}|, |\zeta_k^{s',s} +\mathcal{E}|\}} \quad &\text{if $k$ is $s$-Unfair Arm} \\ \nonumber 
\end{align}

Let $\Delta = \underset{k\in \text{SO}}{\min}\Delta_k$, $\Delta^{s'} = \underset{k\in \text{s'-Unfair}}{\min}\min\{|\zeta_k^{s,s'} -\mathcal{E}|, |\zeta_k^{s,s'} +\mathcal{E}|\}$,
$\Delta^{s} = \underset{k\in \text{s-Unfair}}{\min}\min\{|\zeta_k^{s',s} -\mathcal{E}|, |\zeta_k^{s',s} +\mathcal{E}|\}$,
\begin{align}
    \rho^\star = \max\{\log\frac{20}{\Delta}, \log\frac{12}{\Delta^{s'}}, \log\frac{12}{\Delta^{s}}\}
\end{align}

Define 
\begin{align}
    \mathcal{R}^\star(k) = \Big\{
    a:\rho_s \geq \rho_k
    \Big\}
\end{align}
We assume all arms ideally left after $T$, $l_0= \max\{\log\frac{5}{\zeta_{k^\star}^{s,s'}+\mathcal{E}},\log\frac{5}{-\zeta_{k^\star}^{s,s'}+\mathcal{E}},\log\frac{5}{\zeta_{k^\star}^{s',s}+\mathcal{E}},\log\frac{5}{-\zeta_{k^\star}^{s',s}+\mathcal{E}} \}$ and we have $\min\{\Delta,\Delta_{s'},\Delta_s\}>\frac{10}{\sqrt{T}}$
\begin{align}
    e(T,B) &\leq 1 - \mathbb{P}(S_{1:\rho^{\star}}) \\ \nonumber
    &\leq \sum_{l=1}^{l_0}C_l + \sum_{l=l_0+1}^{\gamma^\star}B_l \\ \nonumber
    &\leq \sum_{l=1}^{\gamma^\star}8|\mathcal{R}^\star(l)|\exp(-\frac{2^{-2l}Tv^\star(B,R)^2}{8l^3\overline{\log}(n(T))})
\end{align}

\begin{align}
    e(T,B) \leq \sum_{i=1}^m8|\mathcal{R}^\star(l_i)|(l_{i+1}-l_i)\exp(-\frac{2^{-2l_i}Tv^\star(B,R)^2}{8l_i^3\overline{\log}(n(T))})
\end{align}

\begin{align}
    e(T,B) &\leq \sum_{k\neq k^\star S}8|\mathcal{R}^\star(k)|\rho^\star\exp(-\frac{2^{-2\rho_k}Tv^\star(B,R^\star(k))^2}{8\rho_k^3\overline{\log}(n(T))}) \\ \nonumber 
    &  \leq  8K^2\rho^\star\exp(-\frac{T}{8\bar{H}\overline{\log}(n(T))})
\end{align}
by the definition of $\bar{H}$ and $|\mathcal{R}^\star(k)| \leq K$.

\end{proof}

\subsection{Proof of Theorem 4.2}
  \begin{proof}[Proof of Theorem 4.2]
  When $K>1$, the algorithm will not return an arm if $\mathcal{F}$ is empty for every round. Denote the set $\mathcal{F}$ in the $l$-th round by $\mathcal{F}(l)$, we have the following sequence of bounds. 
  \begin{align}
   \mathbb{P} &(\text{declaring no fair arm}) \geq  \mathbb{P}(\mathcal{F}(l)=\emptyset\text{ for all }l \in \{1, 2, \ldots, n(T)\}) \\
  & \geq   \mathbb{P}(\forall l \in \{1, 2, \ldots, n(T)\}\text{ and } k \in [K], \exists u_k, v_k \in \{s, s'\}, \hat \xi_k^{u_k, v_k} \notin (-\mathcal{E} + \frac3{2^l}, \mathcal{E} - \frac3{2^l}))\\
  &  = \prod_{l=1}^{n(T)} \left[1-  \mathbb{P}\left(\exists k \in [K] \text{ s.t. } \hat \xi_k^{s, s'}, \hat \xi_k^{s', s}\in (-\mathcal{E} + \frac3{2^l}, \mathcal{E} - \frac3{2^l})\right)\right] \\
  & \geq \prod_{l=1}^{n(T)} \left[1- \sum_{k=1}^K \mathbb{P}\left(\hat \xi_k^{s, s'},  \xi_k^{s', s}\in (-\mathcal{E} + \frac3{2^l}, \mathcal{E} - \frac3{2^l})\right)\right] \\
  & \geq  \prod_{l=1}^{n(T)} \left[1- \sum_{k=1}^K \mathbb{P}\left(\abs{\hat \xi_k^{s, s'}-\xi_k^{s,s'}}\geq \frac3{2^l} + \xi^*,  \abs{\hat\xi_k^{s', s} - \xi_k^{s', s}} \geq \frac3{2^l} + \xi^*\right)\right] \\
  & \geq   \prod_{l=1}^{n(T)} \left[1- \sum_{k=1}^K \mathbb{P}\left(\abs{\hat \xi_k^{s, s'}-\xi_k^{s,s'}}\geq \frac3{2^l} + \xi^*,  \abs{\hat\xi_k^{s', s} - \xi_k^{s', s}} \geq \frac3{2^l} + \xi^*\right)\right] \\
  & \geq  \prod_{l=1}^{n(T)} \left[1- \sum_{k=1}^K 2 \exp (-\frac{(2/2^l+\xi^*)^2\tau_{s'}(l)}{32l^2} (\frac{O^{s, s'}_k}{\tau_{s'}(l)})^2) \cdot 2\exp (-\frac{(2/2^l+\xi^*)^2\tau_s(l)}{32l^2} (\frac{O^{s',s}_k}{\tau_{s}(l)})^2 \right]\\
  & \geq  \prod_{l=1}^{n(T)} \left[1- 4K \exp (-\frac{(2/2^l+\xi^*)^2\tau(l)}{32l^2}v^*(B, R(l))^2)\right]\\
  & \geq 1 - 4K \sum_{l=1}^{n(T)} \exp (-\frac{(\xi^*)^2 T }{32l^3 \overline{\log}\left(n(T)\right)}v^*(B, R(l))^2) \\
  & \geq 1 - 4 K n(T)  \exp (-\frac{(\xi^*)^2 T }{32 n(T)^3 \overline{\log}\left(n(T)\right)}v^*(B)^2).
  \end{align}
  In the above bounds, we first enforce that no arm is eliminated in every round and then apply the independence of samples and the union bounds. Finally, we use the concentration bounds with $\epsilon = 2^{-(l-1)}$ and $\delta = 2/2^l + \xi^*$.
  \end{proof}

\section{Two Stage Algorithm}
This section formally states the two stage algorithm that is used as a baseline in the numerical experiment. The pseudo-code is shown in algorithm~\ref{alg:main}, ~\ref{alg:mainv2}, ~\ref{alg:budget}. The two stage baseline spends half the budget for fair arm identification and the remaining budget to run the best arm identification module SRIS in ~\citet{sen2017identifying}. 
\begin{algorithm*}[!htb]
\caption{Two Stage Constraint Successive Rejection (TS-CSR) - Given $T, B$}
\begin{algorithmic}
\label{alg:main}
\STATE \# First Stage
\FOR{$l = 1$ to $n(T/2)$}
\STATE get $ \tau_{s'}(l),\tau_s(l)$ using Algorithm \ref{alg:budget} and $\nu_{Y,i}=0$
\STATE Use arm $k$, $ \tau_{s'}(l),\tau_s(l)$ times and collect samples $(Y,V,pa(V),S)$
\FOR{$k\in\mathcal{R}$}
\STATE Calculate $ \hat{\zeta}_k^{s,s'}, \hat{\zeta}_k^{s',s}$
\ENDFOR 
\IF{$|\mathcal{R}| = 1$}
 \STATE break 
\ENDIF
\STATE $\mathcal{R} = \mathcal{R} - \{k\in\mathcal{R}:
\big(
\hat{\zeta}_k^{s,s'} - \frac{3}{2^l} > \mathcal{E}
\big)\lor 
\big(
\hat{\zeta}_k^{s,s'} + \frac{3}{2^l} < -\mathcal{E}
\big) \lor 
\big(
\hat{\zeta}_k^{s',s} - \frac{3}{2^l} > \mathcal{E} 
\big)\lor 
\big(
\hat{\zeta}_k^{s',s} + \frac{3}{2^l} < -\mathcal{E}
\big)
\}$
\ENDFOR

\STATE \# Second Stage Use SRIS in \citet{sen2017identifying}
\FOR{$l = 1$ to $n(T/2)$}
\STATE get $\tau_Y(l)$ using Algorithm \ref{alg:budget} and $\nu_{s,i} =0$, $\nu_{s',i}=0$
\STATE Use arm $k$, $\tau_Y(l)$ times and collect samples $(Y,V,pa(V),S)$
\FOR{$k\in\mathcal{R}$}
\STATE Calculate $\hat{Y}_k$
\ENDFOR 
\IF{$|\mathcal{R}| = 1$}
 \STATE return arm in $\mathcal{R}$
\ENDIF
\STATE $\hat{Y}_H = \underset{k\in \mathcal{F}}{\max}\hat{Y}_k$
\STATE $\mathcal{R} = \mathcal{R} - \{k\in\mathcal{R}:\big(\hat{Y}_H > \hat{Y}_k + \frac{5}{2^l}\big)
\}$
\ENDFOR

\end{algorithmic}
\end{algorithm*}

\begin{algorithm*}[!htb]
\caption{Two Stage Constraint Successive Rejection (TS-CSR) - Given $T, B$ (V2)}
\begin{algorithmic}
\label{alg:mainv2}
\STATE Same as Algorithm \ref{alg:main} but calculate $\hat{Y}_k, \hat{\zeta}_k^{s,s'}, \hat{\zeta}_k^{s',s}$ from all the samples collected so far.
\end{algorithmic}
\end{algorithm*}

\begin{algorithm*}[!htb]
\caption{Budget Allocation}
\begin{algorithmic}
\label{alg:budget}
\STATE ALLOCATE($c, a, b, B, \mathbf{A}, \mathbf{B}, \mathbf{C}, \mathcal{R}, \tau$)
\begin{align}
    \frac{1}{\sigma^\star(B, \mathcal{R})} & = v^\star(B,\mathcal{R}) = \max_{\mathbf{v}}\min_{k\in\mathcal{R}} \min [\mathbf{A}\nu_{Y}, \mathbf{B}\nu_s, \mathbf{C}\nu_{s'}]_k \\ \nonumber 
    & s.t. \quad \sum_{i=0}^K c_i\nu_{Y,i} + a\nu_{s,i} + b\nu_{s',i} \leq B\text{, } \quad \sum_{i=1}^K\nu_{Y,i}+\nu_{s,i}+\nu_{s',i} = 1 \quad \text{, and } \quad \nu_{Y,i},\nu_{s,i},\nu_{s',i} \geq 0 \nonumber
\end{align}
\STATE return $\tau_{Y,j}, \tau_{s,j}, \tau_{s',j} = \nu_{Y,j}^\star\tau ,  \nu_{s,j}^\star\tau  , \nu_{s',j}^\star\tau$
\end{algorithmic}
\end{algorithm*}

\section{Ablation Study}

\subsection{Ablation Study}

Finally, we consider the effect of divergence on the result of our algorithm. Our analysis suggests that larger divergence implies smaller information leakage and a harder identification process. We control the divergence of $M_{k1}$ , $D_{k1}^{s,s'}$ and $D_{k1}^{s',s}$ here to investigate how our results change with them. We use the same causal graph and the experimental setting as above. All results are averaged over $100$ runs and we keep minimal gap on both outcome and fairness around $0.02$. This is a hard setting. 

\begin{figure*}[!htb]
\centering     
\subfigure[High $M$ and High $D$]{\label{fig:highhigh}\includegraphics[width=0.3\textwidth]{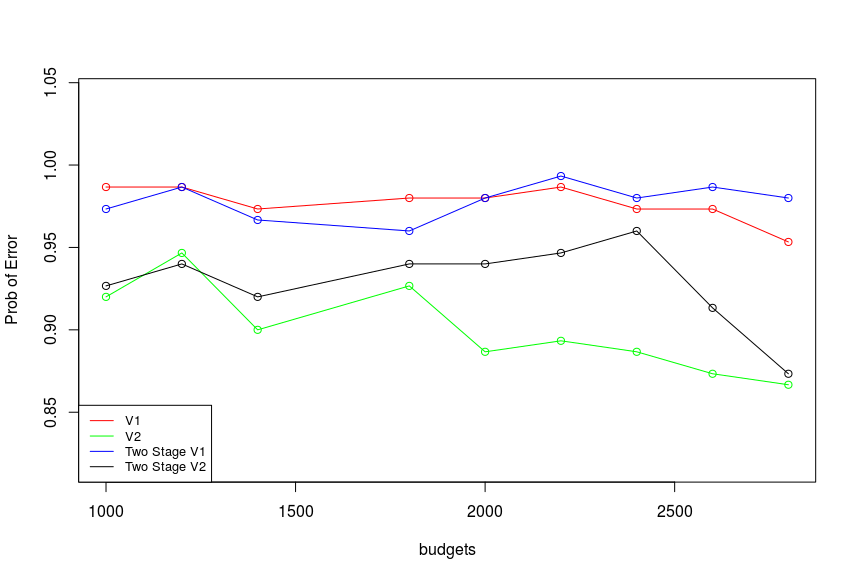}}
\subfigure[Low $M$ and High $D$]{\label{fig:lowhigh}\includegraphics[width=0.3\textwidth]{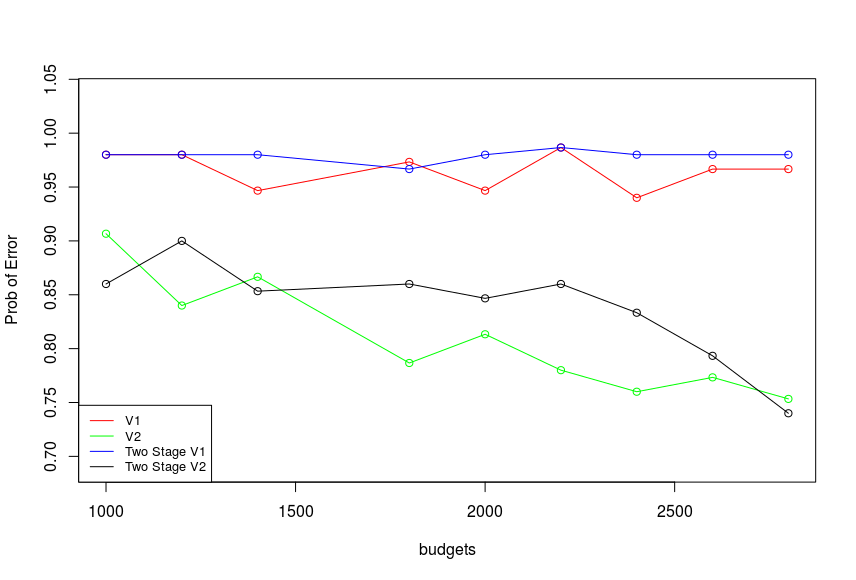}}
\subfigure[ High $M$ and Low $D$]{\label{fig:highlow}\includegraphics[width=0.3\textwidth]{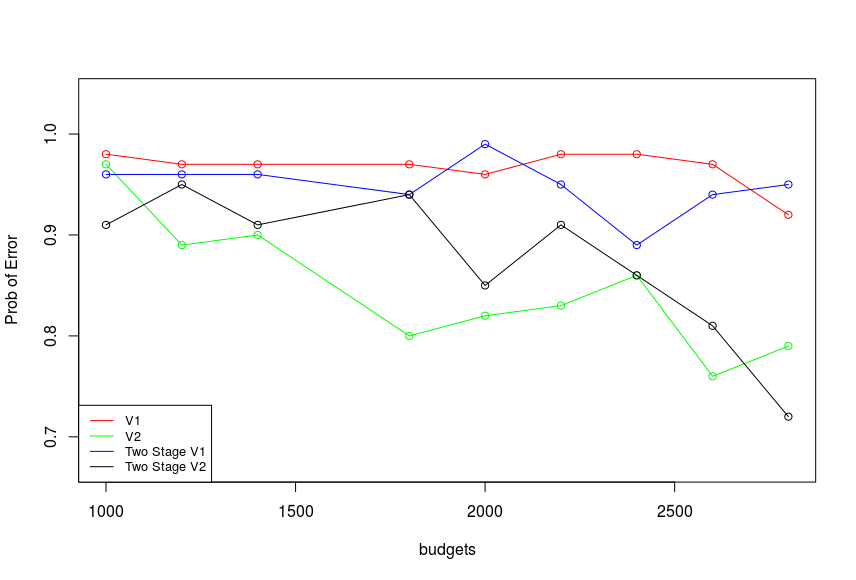}}
\caption{(Red: CSR-V1, Green: CSR-V2, Blue: Two Stage - SRISV1, Purple: Two Stage - SRISV2)}
\label{fig:ablation}
\end{figure*}

    Figure \ref{fig:highhigh} reports the case of high $M$ and high $D$, where $50>M_{i,j}>10$ and $50>D_{i,j}>10$. All algorithms have a high probability of error. Our algorithm V2 beats all other methods and its error decreases to around $0.85$. 
    
    
    Figure \ref{fig:lowhigh} reports the result for low $M$ and high $D$, where $10>M_{i,j}>0$ and $50>D_{i,j}>10$. As the divergence term of fairness becomes smaller, the information leakage improves the estimation of all arms' counterfactual fairness. The final error is around 0.75. The figure supports our analysis that smaller divergence in fairness reduces the difficulty of the problem.  
    
    
    Similarly, for high $M$ and low $D$, where $50>M_{i,j}>10$ and $10>D_{i,j}>0$, the small divergence term suggest that all algorithms should become more effective. We also empirically observe the probability of error drops to around 0.75, which shows a similar effect of $D$ comparing to $M$. 
    

Our ablation study empirically supports our observation that a lower divergence term facilitates the estimation by importance sampling, which in turn improves the identification of the best \emph{fair} arm. 

\section{Causal Graph of HEPAR-II}

\tikzset{node/.style={ellipse,fill=gray!10,draw,minimum size=0.8cm,inner sep=0pt} }
\tikzset{arc/.style = {->,> = latex, thick, } }

\begin{figure}[H]
\centering
    \resizebox{0.3\linewidth}{!}{
    \begin{tikzpicture}[auto,node distance = 0.8 cm, scale = 1.1]
    \node[node] at(-1.2, 4.1) (1) { ... };
    \node[node] at(-0.4, 4.3) (2) { ... };
    \node[node] at(0.4, 4.1) (3) { ... };
    \node[node] at(-0.4,3.3) (ChHepatitis) { ChHepatitis };
    \node[node] at(-1.8, 1.8) (Steatosis) { ... };
    \node[] at(-0.4,2.7) (p) {};
    \node[node] at(-0.4,2) (fibrosis) { fibrosis };
    \node[node] at(-1,1) (Cirrhosis) { Cirrhosis };
    \node[node] at(0, 0) (carcinoma) { carcinoma };
    \node[node] at(1, 1) (PBC) { PBC };
    \node[node] at(1.5,2) (sex) {sex};
    \node[node] at(0.7,2) (age) {age};
    \node[above right = .1cm and .15 cm of fibrosis]  (T) { $P_1, .., P_K$ };
    \draw[arc] (1) to (ChHepatitis);
    \draw[arc] (2) to (ChHepatitis);
    \draw[arc] (3) to (ChHepatitis);
    \draw[arc] (ChHepatitis) to node{ } (fibrosis);
    \draw[arc] (Steatosis) to node{ } (Cirrhosis);
    \draw[arc] (fibrosis) to node{ } (Cirrhosis);
    \draw[arc] (sex) to node{ } (PBC);
    \draw[arc] (age) to node{ } (PBC);
    \draw[arc] (Cirrhosis) to node{ } (carcinoma);
    \draw[arc] (PBC) to node{ } (carcinoma);
    \draw[gray,->>] (T) to [bend right] (p);
    \end{tikzpicture}}
    \caption{Causal graph for the HEPAR-II dataset.
    }
    \label{fig:causal-real}
\end{figure}
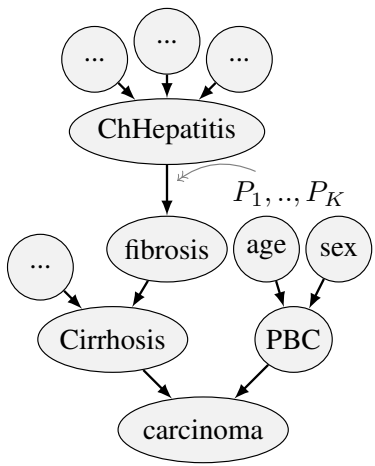

\end{document}